\newtheorem{conj}{Conjecture}
\newtheorem{observation}{Observation}
\newcommand{\etal}{\emph{et al.\/}}
\newcommand{\R}{\mathbb{R}}
\newcommand{\D}{\mathcal{D}}
\renewcommand{\L}{\mathcal{L}}
\renewcommand{\P}{\mathcal{P}}
\newcommand{\U}{\mathcal{U}}
\newcommand{\Tg}{T_{\mbox{\scriptsize global}}}
\newcommand{\vol}{{\mbox{Vol}}}
\newcommand{\mdef}[1]{{\it #1}}
\newcommand{\eps}{\varepsilon}
\renewcommand{\c}[1]{\ensuremath{\EuScript{#1}}}
\newcommand {\mm}[1] {\ifmmode{#1}\else{\mbox{\(#1\)}}\fi}
\title{Independent Range Sampling, Revisited Again}
\author{Peyman Afshani}{Aarhus University}{peyman@cs.au.dk}{}{supported by DFF (Det Frie Forskningsr\" ad) of Danish Council for Indepndent Reserach under grant ID DFF$-$7014$-$00404.}
\author{Jeff M. Phillips}{University of Utah}{jeffp@cs.utah.edu}{}{supported by NSF CCF-1350888, CNS-1514520, CNS-1564287, IIS-1816149, and in particular ACI-1443046.  Part of the work was completed while visiting the Simons Institute for Theory of Computing.}
\authorrunning{Afshani \& Phillips}
\keywords{Range Searching, Data Structures, Sampling}
\begin{document}
\maketitle

\begin{abstract}
We revisit the range sampling problem:
  the input is a set of points where each point is associated with a real-valued weight.
  The goal is to store them in a structure such that given a query range and an integer
  $k$, we can extract  $k$ independent random samples from the points inside the query range, where
  the probability of sampling a point is proportional to its weight. 

  This line of work was initiated in 2014 by Hu, Qiao, and Tao and it was later followed up by Afshani
  and Wei. 
  The first line of work mostly studied unweighted but dynamic version of the problem in  one dimension
  whereas the second result considered the static weighted problem in one dimension as well as the unweighted
  problem in 3D for halfspace queries. 

  We offer three main results and some interesting insights that were missed by the previous work: 
  We show that it is possible to build efficient data structures
  for range sampling queries if we allow the query time to hold in expectation (the first result),
  or obtain efficient worst-case query bounds by allowing the sampling probability to be approximately
  proportional to the weight (the second result).
  The third result is a conditional lower bound that shows essentially one of
  the previous two concessions is needed.
  For instance, for the 3D range sampling queries, the first two results give efficient data structures
  with near-linear 
  space and polylogarithmic query time whereas the lower bound shows with near-linear space the worst-case
  query time must be close to $n^{2/3}$, ignoring polylogarithmic factors. 
  Up to our knowledge, this is the first such major gap between the expected and worst-case query time of
  a range searching problem.
\end{abstract}


\section{Introduction}
In range searching, the goal is store a set $P$ of points in a data structure such that given a query range, we can answer certain 
questions about the subset of points inside the query range. 
The difficulty of the range searching problem, thus depends primarily on the shape of the query as well as types of questions
that the data structure is able to answer. 
These range searching questions have been studied extensively and we refer the reader to the
survey by Agarwal and Erickson~\cite{ae98} for a deeper review of range searching problems. 

Let us for a moment fix a particular query shape.
For example, assume we are given a set $P \subset  \R^2$ to preprocess and at the query time we will be given a query
halfplane $h$.  
The simplest type of question is an {\em emptiness} query where we simply want to report to the user
whether $h \cap P$ is empty or not.
Within the classical literature of range searching, the most general (and thus the most difficult) variant of range searching is
semigroup range searching where each point in $P$ is assigned a weight from a semigroup and at the query time the goal 
is to return the sum of the weights of the points of $h \cap P$.
The restriction of the weights to be from a semigroup is to disallow subtraction.
As a result, semigroup range searching data structures can answer a diverse set of questions.  
Other classical variants of range searching lie between the emptiness and the semigroup variant.
In our example, the emptiness queries can be trivially solved with $O(n)$ space and $O(\log n)$ query time
whereas the semigroup variant can only be solved with $O(\sqrt{n})$ query time using $O(n)$ space. 
Finally, the third important variant, range reporting, where the goal is to output all the points in $P \cap h$, is often close
to the emptiness variant in terms of difficulty.
E.g., halfplane queries can be answered in $O(\log n + k)$ time where $k$ is the size of the output, using $O(n)$ space. 

\subparagraph{Sampling queries.}
Let $P$ be a large set of points that we would like to preprocess for range searching queries. 
Consider a query range $h$. 
Classical range searching solutions can answer simple questions such as the list of points inside $h$, or the number of them.
However, we immediately hit a barrier if we are interested in more complex questions, e.g.,
what if we want to know how a ``typical'' point in $h$ looks like? 
Or if we are curious about the distribution of the data in $h$. 
In general, doing more complex data analysis requires that we extract the list of all the points inside $h$ but this could
be an expensive operation. 
For questions of this type, as well as many other similar questions, it is very useful to be able to extract a (relatively) small
random sample from the subset of points inside $h$. 
In fact, range sampling queries were considered more than two decades ago within the context of database 
systems~\cite{hellerstein1997online,agarwal2013blinkdb,chaudhuri1998random,olkensurvey,olkenthesis}.  
Indeed the entire field of sample complexity, which provides the basis for statistics and machine learning, argues how any statistical quantity can be understood from an iid sample of the data.  Range sampling allows this literature to quickly be specified to data in a query range.  
However, many of these classical solutions fail to hold in the worst-case as they use R-trees or Quadtrees
whose performance depends on the distribution of the coordinates of the input points (which could be pretty bad).
Some don't even guarantee that the samples extracted in the future will be independent of the samples extracted in the past. 
For example, sometimes asking the same query twice would return the same samples.

\subparagraph{The previous results.}
Given a set of $n$ weights, one can sample one weight with proportional probability using
the well-known ``alias method'' by A. J. Walker~\cite{walker1974new}.
This method uses linear space and can sample a weight in worst-case constant time.

The rigorous study of {\em independent range sampling} was initiated by Hu \etal~\cite{hu2014independent}.
They emphasized the requirement that any random sample extracted from the data structure should be
independent of all the other samples. 
They studied the one-dimensional version and for an unweighted point set and presented
a linear-sized data structure that could extract $k$ random samples in $O(\log n + k)$ time
and it could be updated in $O(\log n)$ time as well. 
A few years later, Afshani and Wei~\cite{AW2017} revisited the problem and
solved the one-dimensional version of the problem
for weighted points: they presented a linear-size data structure that could
answer queries in $O(\mbox{Pred}(n)+k)$ time
where $\mbox{Pred}(n)$ referred to the running time of a predecessor query
(often $O(\log n)$ but sometimes it could
be faster, e.g., if the input is indexed by an array then the predecessor query
can be answered trivially in $O(1)$ time).
They also studied the 3D halfspace queries but for unweighted points.  
Their main result was an optimal data structure of linear-size that could answer
queries in $O(\log n + k)$ time. 

\subparagraph{Our results.}
We provide general results for the independent range sampling problem on weighted input (wIRS), and design specific results for 3D halfspace queries.  
We show a strong link between the wIRS and the range max problem.  Namely, we show that the range max problem is at least as hard as wIRS, and also we provide a general formulation to solve the wIRS problem using range max.  
For halfspace queries in 3D, our framework gives a structure that uses $O(n \log n)$ space and has $O(\log^2 n +k)$ query time.
We improve the space complexity to $O(n)$ when the ratio of the weights is $n^{O(1)}$.  
This solution uses rejection sampling, so it only provides an expected query time bound.  
To compensate, we provide another solution that has worst-case query time, but
allows the points to be sampled within a $(1 \pm \eps)$ factor of their desired
probability, and may ignore points that would be sampled with probability less
than $\gamma/n$ for $\gamma < \eps < 1$.  This structure requires $O(n \log n)$
space, and has a worst-case query time of $O(\log(n/\gamma) (\log n + 1/\eps^3)
+ k)$.  

Finally, we show a conditional lower bound when we enforce worst-case query time and exact sampling probabilities, in what we call the separated algebraic decision tree (SAD) model.  This model allows any decision tree structure that compares random bits to algebraic functions of a set of input weights.  
In this model, we show wIRS is as hard as the range sum problem, which is conjectured to be hard.  In particular, if the best known solution to the range sum problem for halfspaces in 3D is optimal, then the wIRS problem would require $\Omega(n^{2/3- o(1)})$ query time if it uses near-linear space.  
This provides the first such separation between expected $O(\log^2 n + k)$ and worst-case $\Omega(n^{2/3 - o(1)})$ query time for a range searching problem that we are aware of.


\section{A Randomized Data Structure}\label{sec:rand}
In this section, we show that if we allow for the query bound to hold in expectation, then the range
sampling problem can be solved under some general circumstances. 
Intuitively, we show that we need two ingredients: one, a data structure for range maximum queries,
and two, a data structure that can sample from a weighted set of points under the assumptions that the weights 
are within a constant factor of each other. 
Furthermore, with an easy observation, we can also show that the range sampling problem is at least as hard as
range maximum problem. 
We consider the input as a general set system $(X, \c{R})$.
We assume the input is a set $X$ of $n$ data elements (e.g., points) and we consider queries to be elements of 
a set of ranges $\c{R}$ where each $R \in \c{R}$ is a subset of $X$. 
The set $\c{R}$ is often given implicitly, and for example, if $X$ is a set of points in 
$\R^3$, $\c{R}$ could be the set of all $h \cap X$ where
$h$ is a halfspace in 3D.
Note that our model of computation is the real RAM model.

\begin{definition}[The Range Maximum Problem]
  Let $X$ be a data set, s.t., each element $x \in X$ is assigned a real-value weight $w(x)$.
  The goal is to store $X$ in a structure, s.t., given  a range $R \in \c{R}$,
  we can find the element in $R$ with the maximum weight. 
\end{definition}

Given a weighted set $X$, for any subset $Y \subset X$, we denote by $w(Y)$ the sum
$\sum_{x\in Y} w(x)$.
First we observe that range sampling is at least as hard as the range maximum problem. 
\begin{restatable}{lemma}{lemreduction}\label{lem:reduction}
  Assume we can solve range sampling queries on input $(X,\c{R})$ and for any weight assignment
  $w \colon X \to \R$ using $S(|X|)$ space and $Q(|X|)$ query time where the query only returns one
  sample. 
  Then, given the set $X$ and a weight function $w'\colon X\rightarrow \R$, 
  we can store $X$ in a data structure of size $S(|X|)$ such that
  given a query $R$, we can find the data element in $R$ with the maximum weight in
  $Q(|X|)$ time, with high probability. 
\end{restatable}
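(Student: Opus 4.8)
The plan is to feed the assumed range‑sampling structure a re‑weighted copy of $X$ in which, inside \emph{every} range, the maximum‑$w'$ element already carries essentially all of the weight, so that a single sample returns it with high probability. Concretely, I would first sort $X$ by $w'$, breaking ties by index, to obtain for each $x$ a value $\mathrm{rank}(x)\in\{1,\dots,n\}$ with a larger rank corresponding to a larger $w'$‑weight; this costs $O(n\log n)$ time during preprocessing and does not affect the space. I would then define a new weight function $w(x) = 2^{\alpha\cdot\mathrm{rank}(x)}$, where $\alpha=\Theta(\log n)$ is a parameter fixed below, and build the given range‑sampling structure on $(X,\c{R})$ with the weights $w$. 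By hypothesis this uses $S(|X|)$ space, since that bound depends only on $|X|$. Note that only the order induced by $w'$ matters here, so the reduction works for arbitrary real (possibly negative or zero) values of $w'$, while the synthesized weights $w$ are always positive.

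To answer a query $R$, I draw a single sample from the structure and return that element. For the analysis, let $x^\star$ be the element of $R$ of maximum $w'$‑weight — equivalently, of maximum rank among the elements of $R$, which is unique because ties were broken by index. Every other element $x\in R$ satisfies $\mathrm{rank}(x)\le\mathrm{rank}(x^\star)-1$ and hence $w(x)\le 2^{-\alpha}w(x^\star)$, so $w(R)\le(1+n\,2^{-\alpha})\,w(x^\star)$ and the returned sample equals $x^\star$ with probability $w(x^\star)/w(R)\ge 1-n\,2^{-\alpha}$. Taking $\alpha=(c+1)\lceil\log_2 n\rceil$ makes this probability at least $1-n^{-c}$ for any desired constant $c$, i.e.\ the query succeeds with high probability, and it clearly runs in $Q(|X|)$ time.

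The step I expect to require the most care — and the only genuinely delicate point — is that the synthesized weights $w(x)$ are enormous, up to $2^{\Theta(n\log n)}$. This is exactly why the reduction is stated in the real‑RAM model and why it matters that $S(\cdot)$ and $Q(\cdot)$ are assumed to depend only on $|X|$, and not on the magnitudes or the spread of the weights; under that assumption the huge weights are harmless. If instead one only had a sampling structure whose performance degraded with the ratio of the largest to the smallest weight, this clean one‑query reduction would break, and one would have to fall back either to the polynomially‑bounded weight‑ratio regime used elsewhere in the paper or to a milder re‑weighting together with $O(\log n)$ independent samples per query (each sample still hitting $x^\star$ with probability at least $1/n$, so $O(\log n)$ of them suffice for high probability).
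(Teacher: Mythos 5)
Your proposal is essentially identical to the paper's proof: both sort the elements by $w'$ and replace the weights with geometrically growing values (your $2^{\alpha\cdot\mathrm{rank}(x)}$ with $\alpha=\Theta(\log n)$ is the paper's $n^{ci}$ up to renaming), then observe that a single sample from the re-weighted structure returns the range maximum with probability $1-n^{-c}$. Your added remarks on tie-breaking and on why the huge weight magnitudes are harmless in the real-RAM model with $|X|$-dependent bounds are correct elaborations, not a different argument.
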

\begin{proof}
    Let $x_1, \cdots, x_n$ be the list of input elements in $X$ sorted by their weight function $w'$.
    We then assign the element $x_i$  a new weight $w(x_i) =  n^{ci}$, for a large enough constant $c$, 
    and store them in the data structure for the range sampling problem. 
    This takes $S(|X|)$ space. 
    Given a query $R$ for the range maximum problem, we sample one element $x_i$ from the range $R$.
    We have $w(x_i) = n^{ci}$ and the total weight of all the other points in $R$ can be at most
    $n\cdot n^{c(i-1)}$, i.e., we find the element with the maximum weight with high probability.
\end{proof}

Next, we show that weighted range sampling can be obtained from a combination of a range maximum data structure
and a particular form of weighted range sampling data structure for almost uniform weights.

\begin{lemma}\label{lem:general}
    Let $(X,\c{R})$ be an input to the range sampling problem.
    Assume, we have a structure for the range maximum queries that uses $O(S_m(|X|))$ space and with query time
    of $O(Q_m(|X|))$.
    Furthermore, assume for any subset $X' \subset X$ 
    we can build a structure $\D_s(X')$ that uses $O(S_s(|X'|))$ space and given a query $R \in \c{R}$, it does the following:
    it can return $w(Y)$ for a subset $Y \subset X'$ with the property that $R \cap X' \subset Y$, and
    $|Y| = O(|R \cap X'|)$ and furthermore, the structure can extract $k$
    random samples from $Y$ in $O(Q_s(|X|) + k)$ time.

  Then, we can answer range sampling queries using $O(S_m(|X|) + S_s(|X|))$ space and with expected query time
  of $O(Q_m(|X|) + Q_s(|X|)\log |X|+ k)$.
\end{lemma}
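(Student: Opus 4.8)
The plan is to combine the two given black boxes---the range maximum structure and the "almost-uniform" sampler $\D_s$---via a bucketing-by-weight-scale argument. The key idea is that once we know the maximum weight $M$ inside the query range $R$, every point in $R\cap X$ with weight in $(M/2,M]$ can be sampled almost uniformly, every point with weight in $(M/4,M/2]$ in the next bucket, and so on. Points with weight below $M/n^{c}$ for a suitable constant contribute only a negligible fraction of the total weight $w(R\cap X)$ and can be ignored up to adjusting constants. So I would partition $X$ into $O(\log|X|)$ weight classes $X_1,X_2,\dots$, where $X_j=\{x\in X: w(x)\in(w_{\max}/2^{j},w_{\max}/2^{j-1}]\}$ relative to the global maximum---or, more carefully, I would build the structure recursively/in levels so that the classes can be defined relative to the \emph{query's} maximum rather than the global one. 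Concretely: build one range-maximum structure on all of $X$ (cost $S_m(|X|)$), and for each of the $\Theta(\log|X|)$ dyadic weight scales build one copy of $\D_s$ on the subset of $X$ whose weights fall in that scale (total cost $\sum_j S_s(|X_j|) = O(S_s(|X|))$ since the $X_j$ partition $X$ and $S_s$ is at least linear).

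**The query procedure.** Given a query $R$: first query the range-max structure to get $M=\max_{x\in R\cap X} w(x)$ in $O(Q_m(|X|))$ time. This pins down that the relevant weight scales are those intersecting $(M/n^{c},M]$, which is $O(\log|X|)$ of them; for each such scale $j$, query $\D_s$ on the corresponding subset restricted to $R$ to obtain a value $w(Y_j)$ with $R\cap X_j\subseteq Y_j$, $|Y_j|=O(|R\cap X_j|)$. This costs $O(Q_s(|X|))$ per scale, so $O(Q_s(|X|)\log|X|)$ total. Now set $\hat W=\sum_j w(Y_j)$; the true weight $w(R\cap X)$ is at least $M$ and $\hat W\le w(R\cap X) + (\text{stuff from }Y_j\setminus R)$. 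To draw $k$ samples: repeat $k$ times the two-stage process---pick a scale $j$ with probability $w(Y_j)/\hat W$, then ask $\D_s$ for one sample from $Y_j$; \emph{reject} (and retry the whole two-stage draw) if the sampled point lies in $Y_j\setminus(R\cap X)$ or lies in a weight scale below the cutoff, and also perform a within-bucket rejection so that the conditional distribution is exactly proportional to $w(x)$ rather than uniform (possible because within a dyadic bucket all weights are within a factor $2$, so accept a uniform sample $x$ with probability $w(x)/(2\cdot\text{bucket-upper-bound})$, i.e.\ at least $1/2$). Because $\D_s$ hands back $k$ samples in $O(Q_s(|X|)+k)$ time, one can batch: over-sample by a constant factor to absorb rejections, draw all needed samples from each $\D_s$ in one shot, and get the claimed $O(Q_m(|X|)+Q_s(|X|)\log|X|+k)$ \emph{expected} time.

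**Correctness of the distribution and the main obstacle.** The correctness argument is: conditioned on acceptance, the probability of outputting a particular $x\in R\cap X$ in scale $j$ is $\frac{w(Y_j)}{\hat W}\cdot\frac{1}{|Y_j|}\cdot\frac{w(x)}{2 u_j}$ up to the normalization from rejection, and since $|Y_j|$ and $u_j$ are the same for all $x$ in that scale, this is proportional to $w(x)$ across \emph{all} scales simultaneously---hence exactly proportional to $w(x)$ over $R\cap X$ (the negligible low-weight tail only costs us a $1+o(1)$ factor in acceptance, which I would need to argue does not bias the distribution, e.g.\ by folding the tail into the rejection so excluded points are simply never output and the surviving distribution is renormalized correctly---actually the cleanest fix is to \emph{include} a bucket that covers all low-weight points and only reject the genuinely out-of-range ones, so that exactness is preserved). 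The main obstacle I anticipate is handling the acceptance probability: I need the expected number of retries per sample to be $O(1)$. The out-of-range rejection is fine because $|Y_j|=O(|R\cap X_j|)$ guarantees a constant fraction of $Y_j$ is in range \emph{on average over the scales weighted by $w(Y_j)$}---but one must be careful that it could be bad for an individual scale; the weighting by $w(Y_j)$ is what saves it, since scales with few in-range points also have small $w(Y_j)$. The within-bucket rejection contributes only a factor $2$. The low-weight-tail contribution is a $1/\mathrm{poly}(n)$ loss, negligible. Assembling these three rejection sources and showing their product stays $\Omega(1)$ in expectation, then converting "expected $O(1)$ retries per sample" into "expected $O(k)$ additional time overall" via linearity, is the technical heart; everything else is bookkeeping on space and the $\log|X|$ scale count.
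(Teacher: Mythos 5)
The overall architecture you propose matches the paper's: a dyadic weight-scale partition of $X$, one range-max query to identify the $O(\log|X|)$ relevant scales, a two-stage rejection sampler (pick a scale, then a point within it), a catch-all bucket for the sub-$M/\mathrm{poly}(n)$ tail, and batching to achieve the claimed expected time. The paper builds the classes greedily (take the largest remaining weight, collect everything within a factor $2$, recurse), keeps suffix sums of class totals for the tail bucket, and pools $k$ top-level alias draws before issuing $O(\log n)$ bulk queries to the $\D_s$'s; this is your sketch up to bookkeeping.

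There is, however, a genuine error in your distribution-correctness step, introduced precisely by the ``within-bucket rejection.'' You pick scale $j$ with probability $w(Y_j)/\hat W$, draw $x$ \emph{uniformly} from $Y_j$, and accept with probability $w(x)/(2u_j)$. The one-shot output probability of $x\in Y_j$ is then
\[
\frac{w(Y_j)}{\hat W}\cdot\frac{1}{|Y_j|}\cdot\frac{w(x)}{2u_j}
\;=\;\frac{w(Y_j)}{|Y_j|\,u_j}\cdot\frac{w(x)}{2\hat W},
\]
and the prefactor $w(Y_j)/(|Y_j|u_j)$ lies in $(1/2,1]$ but is \emph{not} the same for different scales $j$. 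Your sentence ``since $|Y_j|$ and $u_j$ are the same for all $x$ in that scale, this is proportional to $w(x)$ across all scales'' is a non sequitur: constancy within a scale does not make the per-scale constants agree with one another. As written, your scheme is weight-proportional only up to a factor $2$ between scales, while the lemma asserts \emph{exact} weight-proportional sampling. The paper avoids this entirely by taking the $\D_s$ samples to already be weight-proportional within $Y_j$ (this is the intended reading of ``extract $k$ random samples from $Y$,'' and is exactly what the alias structures and Lemma~\ref{lem:tree} deliver in the 3D instantiation); then picking $j$ with probability $w(Y_j)/\hat W$ and sampling within $Y_j$ with probability $w(x)/w(Y_j)$ gives $w(x)/\hat W$ uniformly over scales, so conditioning on acceptance yields exactly the right distribution. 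If you insist on uniform $\D_s$ samples you would need to select scale $j$ with probability proportional to $|Y_j|u_j$ rather than $w(Y_j)$ — but $\D_s$ is only guaranteed to report $w(Y_j)$, not $|Y_j|$.

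Two smaller points. Your worry that per-scale acceptance ``could be bad for an individual scale'' is unfounded: the hypothesis gives $|Y_j|=O(|R\cap X_j|)$ for every $j$ separately, and weights within a scale differ by at most a factor $2$, so $w(R\cap Y_j)=\Omega(w(Y_j))$ scale by scale, not just in aggregate. And for the tail bucket you should say explicitly, as the paper does, that because $R$ contains the range-max element, the tail is reached with probability at most $1/n$, which is what lets you lazily spend $O(n)$ time there (building an alias structure over the whole tail on the fly) while keeping the expected query time under control.
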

\begin{proof}
Let $n = |X|$.
  We store $X$ in a data structure for the range maximum queries. 
  We partition $X$ into subsets $X_i \subset X$ in the following way.
  We place the element $x_1$ with the largest weight in $X_1$ and then we add to $X_1$ any element whose
  weight is at least $w(x_1)/2$ and then recurse on the remaining elements.
  Observe that for all $x,x' \in X_i$ we have $w(x)/w(x') \in (1/2, 2]$.
  Thus, the weight function $w$ is almost uniform on each $X_i$. 
  We store $X_i$ in a data structure $\D_s(X_i)$.

  Next, we build a subset sum information over the total weight
  $W(X_i)$ of the (disjoint) union of all subsets $X_{i'}$ with $i' \geq i$,
  that is, $W(X_i) = \sum_{j\ge i}w(X_j)$.
  Consider a query $R \in \c{R}$.

\subparagraph{Step 1: Use the Range Maximum Structure.}
  Issue a single range-max query on $X$ for the query range $R \in \c{R}$.
  Let $x$ be the answer to the range-max query, assume $x \in X_i$.

\subparagraph{Step 2: top-level alias structure.}
Having found $i$, we identify the smallest index $i'\ge i$ such that the 
maximum weight $w(x')$ for $x' \in X_{i'}$ and the minimum weight $w(x)$ for $x\in X_i$ satisfy $w(x)/w(x') > n^2$.  
As the weights in the sets $X_i$ decreases geometrically, we have $i' - i = O(\log n)$.
Then, for each $X_j$ with $j \in [i,i')$, we use the data structure $\D_s(X_j)$ to identify the set $Y_j$ such that
$Y_j$ contains the set $R \cap X_j$. 
This returns the values $w(Y_j)$, and the total running time of this step is $O(Q_s(n)\log n)$.

We build a top-level alias structure on $i'-i + 1$ values: all the $i'-i$ values $w(Y_j)$, $i \le j  < i'$, as well as the value 
$W(X_{i'})$. 
Let $T = W(X_{i'}) + \sum_{j=i}^{i'-1}w(Y_j)$.
This can be done in $O(\log n)$ time as $i'-i  = O(\log n)$.

\subparagraph{Step 3: Extracting samples.}
To generate $k$ random samples from $R$, we first sample a value using the top-level alias structure. 
This can result in two different cases:
\subparagraph{Case 1.} It returns the value $W(X_{i'})$. 
Let $X_{i'+} =\cup_{j\ge i'}X_j$. 
In this case, we sample an element from the set $R \cap X_{i'+}$, by building
an alias structure on $X_{i+}$ in $O(n)$ time. 
The probability of sampling an element $x_j \in X_{i'+}$ is  set exactly to $\frac{w(x_j)}{W(X_{i'})}$.
Note that these probabilities of $x_j \in X_{i'+} \cap R$ do not add up to one which means 
the sampling might fail and we might not return any element.
If this happens, we go back to the top-level alias structure and try again.
Notice that $R$ contains at least one element $x_i$ from $X_i$, 
and that for any $x \in X_j$, $j\ge i'$ have $w(x) \le w(x_i)/n^2$ which implies
$W(X_{i'}) \le w(x_i)/n$. 
Thus, this case can happen with probability at most $1/n$, meaning, even if we spend
$O(n)$ time to answer the query, the expected query time is $O(1)$.
\subparagraph{Case 2.} It returns a value $w(Y_j)$, for $i \le j < i'$.
We place $Y_j$ into a list for now. 
At some later point (to be described) we will extract a sample $z$ from $Y_i$.
    If $z$ happens to be inside $R$, then we return $z$, otherwise, the sampling fails and we go back to the
    top-level structure.

We iterate until $k$ queries have been pooled. 
Then, we issue them in $O(\log n)$ batches to data structure $\D_s(X_j)$, $i \le j < i'$.
Ignoring the failure events in case (2), processing a batch of $k$ queries will take
$O(Q_s(n)\log n + k)$ time. 
Notice that each iteration of the above procedure will 
succeed with a constant probability: case (1) is very unlikely (happens with probability less than $1/n$)
and for case (2)  observe that we have $w(Y_i) = O(w(R\cap X_i))$ and thus
each query will succeed with constant probability.
As a result, in expectation we only issue a constant number of batches of size $k$ to extract $k$ random samples. 

It remains to show that we sample each element with the correct probability. 
The probability of reaching case (1) is equal to $\frac{W(X_{i'})}{T}$. 
Thus, the probability of sampling an element $x_j \in X_{i'+}$ is  equal to
$\frac{W(X_{i'})}{T}\cdot \frac{w(x_j)}{W(X_{i'})} = \frac{w(x_j)}{T}$. 
The same holds in case (2) and the probability of sampling an element $x_j \in X_j$, $i \le j <i'$ is
$\frac{w(x_j)}{T}$.
Thus, conditioned on the event that the sampling succeeds, each element is sampled
with the correct probability.
\end{proof}

\section{3D Halfspace Sampling}\label{sec:h}
\subsection{Preliminaries}
In this section, we consider random sampling queries for 3D halfspaces. 
But we first need to review some preliminaries. 
\begin{restatable}{lemma}{lemmaxu}\label{lem:maxu}
  Let $P$ be a set of $n$ points in $\R^3$.
  Let $P= P_1 \cup \cdots \cup P_t$ be a partition of $P$ into $t$ subsets.
  We can store $P$ in a data structure of size $O(n \log t)$ such that given 
  a query halfspace $h$, we can find the smallest index $i$ such that
  $P_i \cap h \not = \emptyset$ in $O(\log n \log t)$ time. 
\end{restatable}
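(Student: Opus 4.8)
The plan is to combine a balanced binary tree over the index set $\{1,\dots,t\}$ with a standard linear-space, logarithmic-query-time data structure for 3D halfspace emptiness. Recall that a set of $m$ points in $\R^3$ can be stored in a structure of size $O(m)$ so that, given a query halfspace $h$, one can decide whether $h$ contains a point of the set in $O(\log m)$ time (for instance via the Dobkin--Kirkpatrick hierarchy applied to the convex hull, or any of the standard 3D halfspace-emptiness structures); if desired such a structure also returns a witness point. Call this an \emph{emptiness structure}.

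First I would build a balanced binary search tree $\c{T}$ whose $t$ leaves are labelled $1,\dots,t$ in order. Each node $v$ is associated with the contiguous interval $I_v\subseteq\{1,\dots,t\}$ of leaves in its subtree and with the point set $P_v=\bigcup_{i\in I_v}P_i$; at $v$ I store an emptiness structure $\c{E}_v$ on $P_v$. Each point of $P$ lies in $P_v$ for exactly the $O(\log t)$ ancestors of its leaf, so $\sum_v|P_v|=O(n\log t)$ and, since the emptiness structures are linear-size, the total space is $O(n\log t)$.

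To answer a query I would walk down $\c{T}$ from the root. First query $\c{E}_{\mathrm{root}}$; if $P\cap h=\emptyset$, report that no index exists. Otherwise maintain the invariant that the current node $v$ satisfies $P_v\cap h\neq\emptyset$: at an internal node $v$ with left child $\ell$ (holding the smaller indices) and right child $r$, query $\c{E}_\ell$; if $P_\ell\cap h\neq\emptyset$ descend to $\ell$, and otherwise descend to $r$, where the invariant still holds because $P_v=P_\ell\cup P_r$. When a leaf labelled $i$ is reached, output $i$. For correctness, the invariant gives $P_i\cap h\neq\emptyset$, and every time the search moved to a right child it certified that the entire left subtree --- hence every index in it --- avoids $h$, so no smaller index qualifies; thus $i$ is the smallest such index. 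The descent visits one node per level, i.e.\ $O(\log t)$ emptiness queries, each costing $O(\log n)$, for a total of $O(\log n\log t)$.

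I expect there is no deep obstacle here; the only points requiring care are to invoke an emptiness structure that is genuinely linear-space in $\R^3$, and to realize the search as a single root-to-leaf descent so that only $O(\log t)$ emptiness queries are issued (decomposing an index prefix into $O(\log t)$ canonical pieces and querying each would also work but is less clean and would not improve the bound).
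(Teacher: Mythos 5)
Your proof is correct and takes essentially the same approach as the paper: both build a balanced binary tree over the index set $\{1,\dots,t\}$, store a linear-size halfspace-emptiness structure at each node (the paper realizes this in the dual as point location on the projection of a lower envelope), and answer a query by a single root-to-leaf descent that tests the left child's structure at each level. The primal-versus-dual phrasing is cosmetic; the $O(n\log t)$ space and $O(\log n\log t)$ query analyses coincide.
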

\begin{proof}
    We consider the dual problem where $H_i$ is the set of hyperplanes dual to 
    $P_i$ and the goal is to store them
    in a data structure such that given a query point $q$, we can find the smallest index $i$ such that there exists
    a halfspace of $H_i$ that passes below $q$. 

    Let $H_\ell = H_1 \cup \cdots \cup H_{t/2}$ and $H_r = H_{t/2+1} \cup \cdots H_t$.
    We compute the lower envelope of $H_\ell$, and store its projection in a point location data structure and then recurse
    on $H_\ell$ and $H_r$.
    The depth of the recursion is $O(\log t)$ and each level of the recursion consumes $O(n)$ space and thus the total storage
    is $O(n \log t)$.

    Given a query $q$, we have two cases: if a halfspace of $H_\ell$ passes below $q$, then the answer is obviously in $H_\ell$ so
    we recurse there;  otherwise, no halfspace of $H_\ell$ passes below $q$ and thus, we can
    recurse on $H_r$. 
    This decision can be easily made using the point location data structure.
    The total query time is $O(\log t \log n)$.
\end{proof}

The following folklore result is a special case of the above lemma.
\begin{corollary}\label{cor:max}
  Let $P$ be a set of $n$ points in $\R^3$ where each point $p \in P$ is assigned a real-valued
  weight $w(p)$. 
  We can store $P$ in a data structure of size $O(n\log n)$ such that given a
  query halfspace $h$, we can find the point with maximum weight in $O(\log^2 n)$ time. 
\end{corollary}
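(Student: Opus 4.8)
The plan is to reduce directly to Lemma~\ref{lem:maxu}. First I would impose a total order on $P$ by sorting the points in non‑increasing order of weight, so that $w(p_1) \ge w(p_2) \ge \cdots \ge w(p_n)$, breaking ties arbitrarily. I then take the partition $P = P_1 \cup \cdots \cup P_n$ into singleton classes $P_i = \{p_i\}$, i.e., apply Lemma~\ref{lem:maxu} with $t = n$. Building its structure on this partition uses $O(n\log t) = O(n\log n)$ space, as required.

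Given a query halfspace $h$, I would invoke the query of Lemma~\ref{lem:maxu} to obtain, in $O(\log n\log t) = O(\log^2 n)$ time, the smallest index $i$ such that $P_i \cap h \ne \emptyset$. Since the classes are singletons ordered by strictly decreasing rank in weight, this smallest index $i$ identifies the point $p_i$ of largest weight among the points of $P$ contained in $h$ (any tie among equal‑weight maximizers is resolved by the arbitrary ordering, and any such maximizer is a valid answer). If $h \cap P = \emptyset$, the lemma reports that no index is hit, and we report that $h$ is empty. Reporting $p_i$ then answers the range maximum query within the stated time.

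There is essentially no obstacle here: the only point to check is that ``smallest index hit'' under a decreasing‑weight ordering coincides with ``maximum‑weight point in the range'', which is immediate from the definition of the partition. Hence the corollary follows directly from Lemma~\ref{lem:maxu}. I note in passing that one could instead group $\Theta(\log n)$ consecutive (in weight order) points into each class to reduce $t$ and scan a single class at the end, but the singleton partition already yields the claimed $O(n\log n)$ space and $O(\log^2 n)$ query time, so no such refinement is needed.
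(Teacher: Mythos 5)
Your proof is correct and is exactly the reduction the paper has in mind: the paper labels the corollary ``a special case of the above lemma,'' and the intended instantiation is precisely your singleton partition sorted by non-increasing weight with $t=n$, yielding $O(n\log n)$ space and $O(\log^2 n)$ query time.
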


We also need the following preliminaries.
Given a set $H$ of $n$ hyperplanes in $\R^3$, the \mdef{level} of a point $p$ is the number hyperplanes that pass below $p$.
The \mdef{$(\le k)$-level of $H$} (resp. \mdef{$k$-level of $H$}) 
is the closure of the subset of $\R^3$ containing points with level at most $k$ (resp. exactly $k$).
An \mdef{approximate $k$-level of $H$} is a surface composed of triangles (possibly infinite triangles) that lies
above $k$-level of $H$ but below $(ck)$-level of $H$ for a fixed constant $c$. 
For a point $q \in \R^3$, we define the conflict list of $q$ with respect to $H$ as the subset of  hyperplanes in 
$H$ that pass below $q$ and we denote this with $\Delta(H,q)$.
Similarly, for a triangle $\tau$ with vertices $v_1, v_2$, and $v_3$, we define $\Delta(H,\tau) = \Delta(H,v_1) \cup \Delta(H,v_2) \cup   \Delta(H,v_3)$.
One of the main tools that we will use is the existence of small approximate levels.
This follows from the existence of shallow cuttings together with some geometric observations.
\begin{restatable}{lemma}{lemshallow}\label{lem:shallow}
    For any set $H$ of $n$ hyperplanes in $\R^3$, and any parameter $1 \le k \le n/2$, there exists an approximate
    $k$-level which is a convex surface consisting of $O(n/k)$ triangles.

    Furthermore, we can construct a hierarchy of approximate $k_i$-levels $\L_i$, 
    for $k_i=2^i$ and $i=0, \cdots, \log n$,
    together with the list $\Delta(H,\tau)$ for every triangle $\tau \in \L_i$ in $O(n\log n)$ time and $O(n)$ space.
    Given a query point $q$, we can find an index $i$ in $O(\log n)$ time such that
    there exists a triangle $\tau \in \L_i$ that lies above $q$ such that
    $|\Delta(H,q)| = O(|\Delta(H,\tau)|)$.
\end{restatable}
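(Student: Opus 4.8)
The plan is to derive both parts of the lemma from the theory of \emph{shallow cuttings} of planes in $\R^3$, which is the standard device behind approximate levels. Recall the shallow cutting lemma (Matou\v{s}ek, with an optimal deterministic construction due to Chan and Tsakalidis): for any set $H$ of $n$ planes in $\R^3$ and any $1\le k\le n$, the $(\le k)$-level of $H$ can be covered by a family of $O(n/k)$ vertical prisms, each unbounded from below, such that every prism is contained in the $(\le ck)$-level for a fixed constant $c$. Let $U$ be the union of this family. Since every prism is downward closed in the $z$-direction, so is $U$, hence $U=\{(x,y,z):z\le g(x,y)\}$ where the graph of $g$ is a triangulated terrain with $O(n/k)$ triangles. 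As $U$ contains the $(\le k)$-level and lies inside the $(\le ck)$-level, the graph of $g$ lies between the $k$-level and the $ck$-level pointwise, i.e.\ it is an approximate $k$-level. Moreover each vertex $v$ of this terrain lies on the graph of $g$, hence at level between $k$ and $ck$, so $|\Delta(H,v)|=\Theta(k)$ and consequently $|\Delta(H,\tau)|=\Theta(k)$ for every triangle $\tau$. The one missing property is convexity, which I would get by invoking the known fact that in $\R^3$ a shallow cutting can be chosen so that this terrain is convex (this is precisely the object underlying the optimal linear-space halfspace range reporting structures), yielding a convex surface of $O(n/k)$ triangles with all the stated properties and proving the first paragraph.

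For the hierarchy I would run the above for every $k_i=2^i$, $i=0,\dots,\log n$, using the hierarchical deterministic shallow cutting construction of Chan and Tsakalidis, which produces all the levels $\L_i$ together with the conflict lists $\Delta(H,\tau)$ in $O(n\log n)$ total time. The $O(n)$ space bound comes from the nesting of consecutive shallow cuttings: each triangle of $\L_i$ is covered by $O(1)$ triangles of $\L_{i+1}$, so the conflict lists can be represented relative to the next coarser level rather than stored independently at every one of the $\log n$ scales.

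For the query, note that each $\L_i$ is a downward-$z$-closed convex terrain squeezed between the $k_i$-level and the $ck_i$-level, so a point $q$ lies at or below $\L_i$ if and only if its level is at most $ck_i$. In that case, letting $\tau\in\L_i$ be the triangle directly above $q$, the vertices of $\tau$ have level at least $k_i$, so $|\Delta(H,\tau)|\ge k_i\ge |\Delta(H,q)|/c$, which gives $|\Delta(H,q)|=O(|\Delta(H,\tau)|)$ as claimed; choosing the smallest such $i$ in addition makes $|\Delta(H,\tau)|=\Theta(|\Delta(H,q)|)$, which is what the applications actually need. Finding this smallest $i$ is a point-location walk down the hierarchy of nested terrains; done naively it costs $O(\log^2 n)$, but the Chan and Tsakalidis hierarchy is arranged so that after a single $O(\log n)$ point location the correct level and triangle are obtained with $O(1)$ work per level (alternatively via fractional cascading across the $O(\log n)$ terrains), for $O(\log n)$ time overall.

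The main obstacle I expect is not the query but making the first two properties coexist cleanly: producing a genuinely convex surface that still has only $O(n/k)$ triangles \emph{and} conflict lists of size $\Theta(k)$, and verifying that the full $\log n$-level hierarchy together with its conflict lists fits in $O(n)$ space rather than the naive $O(n\log n)$. Both points rest on structural properties of $3$-d shallow cuttings, so the proof amounts to assembling and verifying that machinery rather than inventing a new argument.
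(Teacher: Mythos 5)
Your overall approach is the same as the paper's: both rest entirely on citing the standard machinery of shallow cuttings in $\R^3$ — Matou\v{s}ek's original shallow cutting lemma, Chan's observation that the triangulated convex hull of the shallow cutting yields a convex terrain, and the Chan--Tsakalidis deterministic $O(n\log n)$-time hierarchical construction. The paper's proof is simply a paragraph of citations; yours unpacks the first paragraph (existence of a convex approximate level with $O(n/k)$ triangles and conflict lists of size $\Theta(k)$) in more detail, which is fine.

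There are two places where you deviate, one cosmetic, one problematic. On the query, the paper cites a specific $O(n)$-space data structure (an approximate-depth / shallow-depth structure due to Afshani et al.) that returns a $\Theta(1)$-approximation $\tilde{k}$ to $|\Delta(H,q)|$ in $O(\log n)$ worst-case time, and then sets $i=\log\tilde{k}+O(1)$; you instead propose a fractional-cascading walk down the nested terrains. Both give $O(\log n)$ query time and are standard, so this is just a different citation.

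The problematic part is your justification of the $O(n)$ space bound. You claim the conflict lists $\Delta(H,\tau)$ across all $\log n$ levels can be compressed into $O(n)$ total space by representing each one relative to the covering triangle at the next coarser level. But the total \emph{content} of all conflict lists is $\sum_{i}(n/k_i)\cdot\Theta(k_i)=\Theta(n\log n)$, and a pointer to a coarser triangle does not by itself tell you which of its $\Theta(2k_i)$ hyperplanes are also in the finer conflict list of size $\Theta(k_i)$, so you would still need $\Theta(k_i)$ bits per triangle to delimit the sublist — no saving. In fact the paper's $O(n)$ space bound refers to the auxiliary \emph{query} structure (the approximate-depth structure), not to storing the full hierarchy of conflict lists; the $O(n\log n)$ bound in the lemma is the construction time, and the construction can enumerate the conflict lists without retaining all of them. (Where the paper does achieve sublinear-per-level storage of conflict information, in the proof of Theorem~\ref{thm:3dexp}, it does so via the global tree $\Tg$ and $O(k_i/\log^2 m)$-node canonical decompositions, which is a different mechanism and not what Lemma~\ref{lem:shallow} is claiming.) So you should drop the relative-representation argument and instead note that the $O(n)$ space is for the query data structure only.
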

\begin{proof}
  Shallow cuttings were first introduced by 
	Matou\v sek~\cite{Matousek.reporting.points} and later
	Chan~\cite{c00} observed that we can work with the triangulation of the convex hull of Matou\v sek's construction.
  As a result, shallow cuttings could be represented as convex surfaces formed by triangles. 
	Ramos~\cite{Ramos.SOCG99} offered a randomized $O(n\log n)$ time construction algorithm that could
  build a hierarchy of shallow cuttings in $O(n\log n)$ time possibly together with the conflict lists
  (i.e., $\Delta(H,\tau)$ for any triangle). 
  This was recently made deterministic~\cite{chanshallow}.

  Regarding the query part, it is known that we can store $H$ in 
  a data structure of size $O(n)$ such that given a query point $q$, we can
  find a constant factor approximation $\tilde k$ such that $|\Delta(H,q)|\le \tilde{k} = O(|\Delta(H,q)|)$ 
  in $O(\log n)$ worst-case time~\cite{AHZ.UB.CGTA}.
  We can then set $i=\log(\tilde{k})+O(1)$ and note that his method also finds $\tau$.
\end{proof}
We will also use the following results. 
     
\begin{theorem}[The Partition Theorem]\cite{Matousek.efficient.par}
	Given a set $P$ of $n$ points in 3D and an integer $0 < r \le n/2$,
	there exists a partition of $P$ into $r$ subsets $P_1, \cdots, P_r$, each of size
	$\Theta(n/r)$, where each $P_i$ is enclosed by a tetrahedron $T_i$, s.t.,
	any hyperplane crosses $O(r^{2/3})$ tetrahedra.

\end{theorem}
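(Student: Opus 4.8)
The plan is to build the $r$ tetrahedra one at a time by the classical re-weighting (weighted-sample) technique, with the Cutting Lemma as the workhorse. Fix a test set $H$ of hyperplanes and give every $h\in H$ weight $1$; one may take $H$ to be the $\binom{n}{3}$ hyperplanes spanned by triples of $P$, or, better, a $\mathrm{poly}(r)$-size subset of these that still $1/2$-approximates the stabbing number of every tetrahedron (which exists by the $\varepsilon$-approximation theorem). Run $r$ rounds. Entering round $j$, a set $P^{(j)}$ of roughly $n(r-j+1)/r$ points is still uncovered and $H$ carries total weight $W_j$ (so $W_1=|H|$). In round $j$ I would take a $(1/s_j)$-cutting of the weighted arrangement of $H$ with $s_j=\Theta\big((|P^{(j)}|\,r/n)^{1/3}\big)=\Theta\big((r-j+1)^{1/3}\big)$: it consists of $O(s_j^{3})$ tetrahedral cells, each stabbed by hyperplanes of $H$ of total weight at most $W_j/s_j$, and since there are only $O(s_j^{3})$ cells, some cell $\sigma$ contains at least $|P^{(j)}|/O(s_j^{3})=\Omega(n/r)$ of the uncovered points. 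I then clip $\sigma$ down to a sub-tetrahedron $T_j$ holding between $n/r$ and $2n/r$ uncovered points (clipping only shrinks the set of hyperplanes that cross it, hence only lowers the stabbing weight), set $P_j=P^{(j)}\cap T_j$, delete these points from the uncovered set, and double the weight of every $h\in H$ crossing $T_j$.

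The crossing bound is then pure bookkeeping. Doubling in round $j$ raises the total weight by at most the stabbing weight $W_j/s_j$, so $W_{j+1}\le W_j\big(1+O(1/s_j)\big)$ and hence $W_r\le|H|\cdot\exp\!\big(O\big(\sum_{i=1}^{r}i^{-1/3}\big)\big)=|H|\cdot e^{O(r^{2/3})}$, using $\sum_{i\le r}i^{-1/3}=\Theta(r^{2/3})$. If a test hyperplane $h$ crosses $t$ of the tetrahedra $T_1,\dots,T_r$, its weight was doubled $t$ times, so $2^{t}\le W_r\le|H|\,e^{O(r^{2/3})}$, i.e.\ $t=O(\log|H|+r^{2/3})$; with $|H|=\mathrm{poly}(r)$ this is $O(\log r+r^{2/3})=O(r^{2/3})$. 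A standard test-set argument then upgrades this to: \emph{every} hyperplane of $\R^3$ crosses $O(r^{2/3})$ of the tetrahedra. Since each round removes $\Theta(n/r)$ points and there are $r$ rounds (absorbing a short last part into its predecessor if need be), the $P_j$ form a partition of $P$ into $r$ parts of size $\Theta(n/r)$, each enclosed by its tetrahedron $T_j$.

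I expect the main real obstacle to be making the bound $O(r^{2/3})$ come out \emph{uniformly in $r$}, in particular when $r$ is only polylogarithmic in $n$: there the trivial estimate ``$h$ crosses at most $r$ tetrahedra'' is too weak, and the re-weighting analysis only yields $O(\log|H|+r^{2/3})$, so one is forced to run the construction against a test set $H$ of size polynomial in $r$ and then separately argue that small crossing number against $H$ implies small crossing number against \emph{all} hyperplanes --- the latter because the $O(r)$ planes carrying the facets of the $T_j$ cut (dual) space into $\mathrm{poly}(r)$ cells, so it suffices that $H$ contains one representative per cell. The remaining steps --- choosing $s_j=\Theta((r-j+1)^{1/3})$, clipping a heavy cutting cell to a tetrahedron with between $n/r$ and $2n/r$ uncovered points, and merging the short last part --- are routine but fiddly. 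Since the statement is quoted from Matou\v{s}ek~\cite{Matousek.efficient.par}, all of this is handled there and the theorem may be invoked as a black box; the sketch above is only meant to indicate that the exponent is $2/3=1-1/d$ with $d=3$ because it is the order of $\sum_{i\le r}i^{-1/3}$, equivalently the strength of a $(1/r^{1/3})$-cutting in $\R^3$.
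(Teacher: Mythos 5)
The paper does not prove this statement; it is quoted verbatim from Matou\v{s}ek~\cite{Matousek.efficient.par} and used as a black box, so there is no in-paper proof to compare against. Your sketch is a faithful account of the standard multiplicative-weights proof of the Partition Theorem: iterate $r$ rounds, in each round take a weighted $(1/s_j)$-cutting with $s_j = \Theta((r-j+1)^{1/3})$ so that some cell holds $\Omega(n/r)$ uncovered points, clip it to a tetrahedron with $\Theta(n/r)$ of them, and double the weight of every test hyperplane crossing that tetrahedron. The exponent $2/3$ then falls out of $\sum_{i \le r} i^{-1/3} = \Theta(r^{2/3})$ exactly as you say, and you correctly flag the one genuine subtlety that a careless version misses: with a test set of size $\binom{n}{3}$ the bound is $O(\log n + r^{2/3})$, which is useless for small $r$; one must run the argument against a $\mathrm{poly}(r)$-size test set and separately transfer ``low crossing against $H$'' to ``low crossing against all hyperplanes'' via the arrangement of the $O(r)$ facet-supporting planes. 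Since you conclude by saying the theorem should simply be invoked as cited, your treatment is consistent with how the paper uses it; the sketch itself has no gap I can find beyond the small bookkeeping you already acknowledge (re-triangulating the clipped cell into $O(1)$ tetrahedra, and merging a short final block).
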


\begin{restatable}{lemma}{lemtree}\label{lem:tree}
  Let $T$ be tree of size $n$ where each leaf  $v$ stores a real-valued non-negative weight $w(v)$.
  We can build a data structure s.t., given an internal node $u \in T$ at the query time, we can independently sample
  a leaf with probability proportional to its weight in the subtree of $u$.
  The data structure uses $O(n)$ space and it can answer queries in $O(1)$ worst-case time. 
\end{restatable}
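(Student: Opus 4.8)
The plan is to reduce to a binary tree and then combine a heavy-path decomposition with precomputed alias tables. First I would replace every node of degree $d\ge 3$ by a right-leaning path of $d-1$ new internal nodes, hanging one original child off each new node; this yields a binary tree $T'$ of size $O(n)$ in which every node $u$ of $T$ has a twin $\hat u$ whose subtree has exactly the same leaf set, so it suffices to solve the problem on binary trees and answer a query for $u$ at $\hat u$. Next I would compute a heavy-path decomposition of $T'$ in which the heavy child of a node is the one with more leaves below it. If $u$ lies on a heavy path $\pi=(u=p_0,\dots,p_t)$ with $p_t$ a leaf and $L_j$ the light child of $p_j$, then the leaves of $T'_u$ are $\{p_t\}$ together with those of $T'_{L_0},\dots,T'_{L_{t-1}}$; setting $g_t:=w(p_t)$ and $g_j:=\sum_{\ell\in T'_{L_j}}w(\ell)$, sampling a leaf of $T'_u$ proportionally to $w$ is the same as drawing an index $i\in\{0,\dots,t\}$ with probability $g_i/\sum_{i'}g_{i'}$ and then, if $i<t$, recursively sampling a leaf of $T'_{L_i}$. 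Since each recursive step descends into a light child and therefore at least halves the leaf count, storing at each internal node a Walker alias table over its children's subtree weights already gives an $O(n)$-space structure with $O(\log n)$ query time.

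To remove the $O(\log n)$ factor, I would precompute, for each heavy path $\pi$, a single alias table over all leaves of $T'_{p_0}$ listed in ``departure order'' along $\pi$ (all leaves of $T'_{L_0}$, then of $T'_{L_1}$, $\dots$, then $p_t$), so that a query whose node is the \emph{top} of a heavy path is answered by one lookup. A query at a non-top node $p_j$ targets a \emph{suffix} of this ordering, so it reduces to two steps: sample which block $L_i$ (or $p_t$), $i\ge j$, proportionally to $g_i$ — a weighted sampling query on a suffix of the fixed nonnegative array $(g_0,\dots,g_t)$ of length at most $|\pi|$ — and then perform one lookup in the precomputed departure-order table of $T'_{L_i}$. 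For the first step I would build, once per heavy path, an auxiliary structure that answers weighted sampling from any prefix/suffix of a fixed nonnegative array in $O(1)$ worst-case time, so that the whole recursion collapses and every query costs $O(1)$.

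The main obstacle is twofold. First, realizing weighted prefix/suffix sampling on a fixed array in $O(1)$ \emph{worst-case} time (rejection sampling gives $O(1)$ \emph{expected} time trivially, but the lemma explicitly forbids that concession), which forces an alias-style ``pick a uniform bucket, then do $O(1)$ work'' construction and, crucially, a single linear-size structure that simultaneously serves every prefix — avoiding any search whose length grows with how skewed the weights are. Second, keeping the total space $O(n)$: charged naively the departure-order tables can sum to $\omega(n)$, so the accounting has to exploit that the subtree ranges form a laminar family — sharing tables along heavy paths, or charging each leaf only a bounded amount — rather than storing a fresh leaf-level table at every heavy path top. Reconciling these two constraints while keeping the sampling probabilities \emph{exactly} proportional to the weights (not within a $(1\pm\eps)$ factor) is where the real work lies.
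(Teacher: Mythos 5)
Your reduction does not close, and by your own admission: you reduce the lemma to the sub-problem of sampling from an arbitrary suffix of a fixed nonnegative array in $O(1)$ worst-case time and $O(n)$ space, and then flag this as ``where the real work lies'' without solving it. That sub-problem is exactly the nontrivial ingredient; it is a prior result of Afshani and Wei (their one-dimensional weighted range-sampling structure: $O(n)$ space, $O(1)$ time after a predecessor lookup), and the lemma is essentially a one-line reduction to it, not a re-derivation of it.

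The paper's reduction is also much simpler than your heavy-path route and avoids the space problem you correctly identify. List the leaves of $T$ in DFS order into an array $A$. For any internal node $u$, the leaves in the subtree of $u$ form a \emph{contiguous interval} of $A$ (DFS visits all of them consecutively). So the lemma is exactly one-dimensional weighted range sampling on $A$ for arbitrary intervals, not just prefixes or suffixes, and a single $O(n)$-space Afshani--Wei structure on $A$ suffices. Their query time is $O(1)$ plus one predecessor query, and here the predecessor query is free: only $n$ distinct intervals can ever be queried (one per internal node), so you precompute and store the two endpoints as pointers from $u$ into $A$. By contrast, your departure-order tables count each leaf once per heavy path above it, which is $\Theta(\log n)$ times in the worst case, so they sum to $\Theta(n\log n)$; the ``sharing tables along heavy paths'' fix you gesture at would, once made rigorous, just reinvent the DFS-interval observation. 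Your binarization step and the observation that light-child descent halves the leaf count are fine as far as they go, but they only buy $O(\log n)$ query time, and the jump from there to $O(1)$ is precisely the step your proposal leaves open.
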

\begin{proof}
  Let $A$ be an array of the leafs obtained using the DFS ordering of the leafs.
  Observe that for any internal node $u$, the leafs in the subtree of $u$ correspond to a
  contiguous interval of $A$.
  Thus, the problem reduces to one-dimensional range sampling queries. 
  Afshani and Wei~\cite{AW2017} showed that these queries can be solved in $O(1)$ time plus the time
  it takes to answer predecessor queries. In our problem, since there are only $n$ different possible queries
  (one for each internal node), we can simply store a pointer from each internal node $u$ to the location of its
  predecessor in array $A$. 
\end{proof}

Afshani and Wei~\cite{AW2017} consider unweighted sampling for 3D halfspace queries.
We next use the following technical result of theirs.  
\begin{lemma}\label{lem:AW}
  Let $H$ be a set of $n$ hyperplanes in 3D. 
  Let  $f(n) = (\log n)^{c \log\log n}$ where $c$ is a large enough constant.
  We can build a tree $\Tg$ with $n$ leafs where each hyperplane is stored in one leaf such that the following holds:
  Given a point $q \in \R^3$ with level $k$ where $k \ge f(n)$, we can find
  $k'=O(k/\log^2 n)$ internal nodes $u_1, \cdots, u_{k'}$ in $\Tg$ such that 
  $\Delta(H,q) = \Tg(u_1) \cup \cdots \cup \Tg(u_{k'})$ where $\Tg(u_i)$ is the set of hyperplanes stored in the subtree of $u_i$.
\end{lemma}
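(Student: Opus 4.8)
The plan is to build $\Tg$ on top of a recursively defined family of shallow cuttings and then argue that a query point's conflict list aligns with only a few subtrees.

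\emph{Building the tree.} Starting from $H$, I would apply the shallow-cutting machinery of Lemma~\ref{lem:shallow} recursively: a set of $m$ hyperplanes is covered by the $O(\sqrt m)$ cells of an approximate $\Theta(\sqrt m)$-level, each with a conflict list of size $O(\sqrt m)$, and one recurses on each cell's conflict list. Since the set size obeys $m\mapsto\sqrt m$, after $t=O(\log\log n)$ rounds every conflict list has size $\Theta(\log^2 n)$ and one stops. I would use the bottom-level pieces to define a partition of $H$ into $\Theta(n/\log^2 n)$ \emph{atoms} and build $\Tg$ so that each atom is a subtree of $\Theta(\log^2 n)$ leaves and every cell appearing in the recursion is the subtree equal to the union of the atoms below it. (Making the partition and the tree consistent with the cover-based recursion --- a hyperplane sits in many cover-cells but must land in one leaf --- already needs some care.) Then every hyperplane lies in exactly one leaf and every recursive cell is a canonical subtree; together with Lemma~\ref{lem:tree} such a subtree later supports $O(1)$-time proportional sampling. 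The depth $t=\Theta(\log\log n)$ is exactly what forces the threshold $f(n)=(\log n)^{c\log\log n}$: for the query to have slack at each of the $t$ rounds, $k$ must exceed the product of the per-round overshoot factors, which is $(\log n)^{\Theta(t)}=f(n)$.

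\emph{Answering a query.} Given $q$ with level $k\ge f(n)$, I would use the point-location hierarchy of Lemma~\ref{lem:shallow} (and Lemma~\ref{lem:maxu}) to locate, at resolution $\Theta(k)$, the top cell $\tau\ni q$, so that $\Delta(H,q)\subseteq\Delta(H,\tau)$ with $|\Delta(H,\tau)|=O(k)$, and then descend the recursion inside $\Delta(H,\tau)$. At each round the current cell splits into subcells; each subcell that provably lies on the ``below $q$'' side of the $k$-level is emitted as one of the output nodes $u_1,u_2,\ldots$, and each subcell that straddles the $k$-level is recursed into; after $t$ rounds the still-uncertain hyperplanes number only $(\log n)^{O(1)}$ and are resolved directly. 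Emitting whole subcells is what makes $\bigcup_j\Tg(u_j)$ \emph{exactly} $\Delta(H,q)$ rather than an $O(k)$-size superset: every hyperplane certified as below $q$ really is, and every remaining hyperplane of $\Delta(H,q)$ is certified before the recursion bottoms out. Each descent is one point location costing $O(\log n)$, so the overhead is $O(\log n\log\log n)$ on top of the $O(k/\log^2 n)$ spent listing the $u_j$'s.

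\emph{Bounding the number of subtrees --- the main obstacle.} It remains to show the output has size $O(k/\log^2 n)$. The emitted ``below $q$'' subcells are pairwise disjoint and together cover a subset of the $k$ hyperplanes of $\Delta(H,q)$ at atomic granularity $\Theta(\log^2 n)$, so there are $O(k/\log^2 n)$ of them. The hard part is bounding, summed over all $t$ rounds, the number of \emph{straddling} cells. One wants a packing bound of the form ``at resolution $\rho$ only $O(k/\rho)$ cells can straddle the $k$-level'', since then, over the decreasing resolutions $\rho=\Theta(\sqrt n),\Theta(n^{1/4}),\ldots,\Theta(\log^2 n)$, the terms $k/\rho$ form a geometric sequence dominated by the last, $O(k/\log^2 n)$. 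I expect this charging argument to be the crux: making ``number of boundary cells at resolution $\rho$'' precise in $3$D, reconciling it with the fact that a straddling \emph{atom} would naively cost up to $\log^2 n$ individual leaf-subtrees --- so the last rounds must be handled with extra care, which is presumably where the large constant $c$ in $f(n)$ and an auxiliary bottom-level structure provide the needed slack --- and choosing the cutting levels so the accounting closes, is exactly the place where the specifics of the Afshani--Wei construction are genuinely needed; the rest follows the standard recursive shallow-cutting template behind optimal $3$D halfspace reporting.
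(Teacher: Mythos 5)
The paper itself offers no proof of this lemma: it is cited as a black box, pointing to the ``Global Structure'' $\Tg$ of Afshani and Wei~\cite{AW2017}, with the explicit caveat that even that paper never states the lemma in this form. So there is no in-paper argument to check your proposal against --- the burden falls entirely on your reconstruction, which, as you yourself flag, is not complete.

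Your sketch captures a plausible architecture: a recursive shallow-cutting hierarchy with $m\mapsto\Theta(\sqrt m)$ branching and $O(\log\log n)$ depth, emitting whole subtrees for cells certified fully below $q$, recursing on straddling cells, stopping at atoms of size $\Theta(\log^2 n)$, and $f(n)=(\log n)^{\Theta(\log\log n)}$ arising as the product of per-round slack. But the two steps you explicitly defer to ``the specifics of the Afshani--Wei construction'' are precisely where the lemma would actually be proved, and both are substantive gaps rather than bookkeeping. First, the cover-to-partition step: shallow-cutting conflict lists overlap heavily (one hyperplane appears in the conflict lists of many cells), so the assertion that ``each recursion cell is exactly a canonical subtree of $\Tg$'' --- simultaneously across all rounds, while each hyperplane occupies a single leaf --- is a nontrivial structural claim; without a concrete routing rule, $\Tg(u)$ is not even well defined for a recursion cell $u$, and the claimed exact equality $\bigcup_j\Tg(u_j)=\Delta(H,q)$ has no basis. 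Second, the straddling-cell bound: you want ``$O(k/\rho)$ cells at resolution $\rho$ straddle $\partial h$'', but the cutting at resolution $\rho$ of a subproblem of size $m$ has $\Theta(m/\rho)$ cells total, not $\Theta(k/\rho)$; a genuine geometric packing argument is required, and since the recursion branches on every straddling cell, the counts compound across the $O(\log\log n)$ rounds rather than forming the geometric series you assume. Until these are closed, the $O(k/\log^2 n)$ output bound is unjustified; what you have re-derived is the scaffold within which the Afshani--Wei argument operates, not the argument itself.
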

Unfortunately, the above lemma is not stated explicitly by Afshani and Wei, however, $\Tg$ is the ``Global Structure'' that
is described in~\cite{AW2017} under the same notation.

\subsection{A Solution with Expected Query Time}
\label{sec:expected}
We now observe that we can use Lemma~\ref{lem:general} to give a data structure for
weighted halfspace range sampling queries in 3D. 
We first note that using Corollary~\ref{cor:max} and by building a hierarchy of shallow
cuttings, i.e., building approximate $k_i$-levels for $k_i = 2^i$, $i=0, \cdots, \log n$,
we can get a data structure with $O(n\log n)$ space that can answer queries in
$O(\log^2 n + k)$ query time.
Furthermore, as Lemma~\ref{lem:reduction} shows, our problem is at least as hard as the halfspace range maximum
problem which currently has no better solution than $O(n\log n)$ space and $O(\log^2 n)$ expected query time. 
Thus, it seems we cannot do better unless we can do better for range maximum queries, a problem that seems
very difficult.

However, the reduction given by Lemma~\ref{lem:reduction} is not completely satisfying 
since we need to create a set of weights that are exponentially distributed.
As a result, it does not capture a more ``natural'' setting where the ratio between the largest
and the smallest weight is bounded by a parameter $U$ that is polynomial in $n$.
Our improved solution is the following which shows when $U = n^{O(1)}$ we can in fact
reduce the space to linear. 
\begin{restatable}{theorem}{thmexp}\label{thm:3dexp}
  Let $P$ be a set of $n$ weighted points in $\R^3$, where the smallest weight is 1 and the largest weight is $U$.
  We can store $P$ in a data structure of size $O(n \min\{\log n , \log\log_n U\})$ such that
  given a query halfspace and an integer $k$, we can extract $k$ independent
  random samples from the points inside $h$ in $O(\log^2 n + k)$ time. 
\end{restatable}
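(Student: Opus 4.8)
The plan is to apply Lemma~\ref{lem:general}, so I need two ingredients: a range-max structure for 3D halfspaces, and a ``constant-factor weights'' sampling structure $\D_s$ for each weight class. Corollary~\ref{cor:max} already gives the range max part with $O(n\log n)$ space and $O(\log^2 n)$ query time, which is fine when $\log n$ is the relevant bound. The work is in building $\D_s$ with the extra twist that we only pay $O(\min\{\log n,\log\log_n U\})$ space per point rather than $O(\log n)$. The main idea is that the partition of $P$ into weight classes $X_1, X_2, \dots$ (as in the proof of Lemma~\ref{lem:general}, where consecutive weights in a class differ by at most a factor of $2$) has at most $O(\log U)$ nonempty classes, and within Lemma~\ref{lem:general} we only ever touch a window of $O(\log n)$ consecutive classes around the range-max answer. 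So the classes can be grouped into \emph{super-classes} of $\Theta(\log n)$ consecutive weight classes each; there are $O((\log U)/\log n) = O(\log_n U)$ super-classes, and each point needs to belong to a data structure replicated across $O(\log\log_n U)$ levels rather than $O(\log n)$ — this is the source of the $\min$.

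Concretely, for each (super-)class I would build the shallow-cutting hierarchy of Lemma~\ref{lem:shallow} on the dual hyperplanes: given a query halfspace $h$ (dual to a point $q$), I find an approximate $k_i$-level triangle $\tau$ lying above $q$ with $|\Delta(H,\tau)| = O(|\Delta(H,q)|) = O(|h\cap X'|)$, and set $Y$ to be the union of the conflict lists of $\tau$'s three vertices. This $Y$ satisfies $h\cap X' \subseteq Y$ and $|Y| = O(|h\cap X'|)$. To return $w(Y)$ I precompute, with each triangle of each level, the total weight of its conflict list. To extract samples from $Y$, since $Y$ is just the union of three conflict lists each of which is ``stored'' in the cutting, I can use the tree-sampling primitive of Lemma~\ref{lem:tree} (or simply an alias table precomputed per triangle when weights are nearly uniform — which they are, within a super-class, up to the $2^{\Theta(\log n)} = n^{\Theta(1)}$ span, and that factor is absorbed exactly as Case~1 of Lemma~\ref{lem:general} handles tails of weight below $1/n^2$ of the max). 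This gives $S_s(n) = O(n\cdot \min\{\log n,\log\log_n U\})$ and $Q_s(n) = O(\log n)$ per class, and the $\log|X|$ factor in Lemma~\ref{lem:general}'s query bound is exactly the window of $O(\log n)$ classes, but grouped into super-classes it becomes a constant number of super-classes each queried in $O(\log n)$ time, yielding $O(\log^2 n + k)$ overall. Plugging $S_m = O(n\log n)$, $S_s = O(n\min\{\log n,\log\log_n U\})$, $Q_m = O(\log^2 n)$, $Q_s = O(\log n)$ into Lemma~\ref{lem:general} gives the claimed bounds.

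The step I expect to be the main obstacle is getting the space accounting for $\D_s$ down to $O(n\min\{\log n,\log\log_n U\})$ while still being compatible with Lemma~\ref{lem:general}'s requirement that $\D_s$ works \emph{per weight class} and returns a superset $Y$ of size $O(|R\cap X'|)$. Naively, each of the $O(\log U)$ weight classes gets its own $O(|X_i|\log|X_i|)$-space shallow-cutting hierarchy, summing to $O(n\log n)$ — no better than the generic bound. The fix via super-classes requires re-examining Lemma~\ref{lem:general}: because consecutive weight classes shrink geometrically and we truncate at a factor $n^2$ below the max, a super-class spanning $\Theta(\log n)$ classes has weight ratio $n^{\Theta(1)}$, which is \emph{not} ``almost uniform,'' so I cannot directly feed a whole super-class to a uniform-weight sampler. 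Instead I keep the per-class alias tables (linear total, no replication) for the sampling step but share the \emph{geometric} shallow-cutting skeleton across a super-class — the skeleton depends only on point positions, not weights — so that only $O(\log_n U)$ copies of the $O(n)$-space cutting hierarchy are needed, giving the $\log\log_n U$ term, while the $\log n$ term is the fallback when $U$ is huge. Verifying that the conflict-list weights and the sampling structures compose correctly across this split is the delicate part; everything else is assembling Lemmas~\ref{lem:general}, \ref{lem:shallow}, \ref{lem:tree} and Corollary~\ref{cor:max}.
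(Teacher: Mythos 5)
Your high-level architecture (weight classes, Lemma~\ref{lem:general}, shallow cuttings, Lemma~\ref{lem:tree}) is consistent with the paper's, and you correctly identify that the obstacle is bringing the $\D_s$ space below the naive $O(n\log n)$, but the mechanism you propose for closing that gap does not work and is not what the paper does.

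The central missing ingredient is Lemma~\ref{lem:AW}: the paper does not merely share a ``geometric skeleton'' across weight classes --- it represents each conflict list $\Delta(X,\tau)$ as a union of only $O(k_i/\log^2 m)$ \emph{subtrees} of the global tree $\Tg$, stores an alias structure over those $O(k_i/\log^2 m)$ subtree weights at $\tau$, and then samples a leaf from the chosen subtree in $O(1)$ time via Lemma~\ref{lem:tree}. This is the step that turns the $O(m\log m)$ total size of conflict lists across all levels of the shallow-cutting hierarchy into $O(m)$ storage, and hence $O(n)$ over all classes. Your version has nothing playing this role: sharing the cutting \emph{skeleton} saves nothing on the dominant term, because the triangles still need a per-triangle means of sampling from $\Delta(X,\tau)$, and a per-class alias table on the full class $X_\ell$ gives you no way to restrict sampling to a given triangle's conflict list. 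So as written the $\D_s$ part of your structure is stuck at $\Theta(n\log n)$.

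Two secondary problems compound this. First, the arithmetic ``$O(\log_n U)$ copies of an $O(n)$-space hierarchy gives the $\log\log_n U$ term'' is simply off by an exponential: that product is $O(n\log_n U)$, not $O(n\log\log_n U)$. In the paper the $\log\log_n U$ factor does not come from replicating anything $\log\log_n U$ times; it comes from the range-max structure. Second, you use Corollary~\ref{cor:max}, which is $O(n\log n)$ space and $O(\log^2 n)$ time unconditionally, so even if $\D_s$ were linear you could never beat $O(n\log n)$. The paper instead partitions directly into $t\le\min\{n,\log_n U\}$ classes with weight ratio $\le n^2$ inside each (roughly your ``super-classes'', skipping the intermediate factor-2 layer), and invokes Lemma~\ref{lem:maxu}, which gives a range-first-nonempty structure of size $O(n\log t)$ and query time $O(\log n\log t)$; taking $\log t = O(\min\{\log n,\log\log_n U\})$ is exactly where the $\min$ in the space bound comes from. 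In short: you need Lemma~\ref{lem:maxu} rather than Corollary~\ref{cor:max} for the class-finding step, and you need the $\Tg$-based compact conflict-list representation (Lemma~\ref{lem:AW}) plus Lemma~\ref{lem:tree} for the per-class sampler; neither appears in your proposal.
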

\begin{proof}[Proof summary]
    Let $X$ be the set of hyperplanes dual to $P$.
    We partition $X$ into sets $X_1, \dots, X_t$ where the weights of the elements in 
    $X_i$ are larger than those of $X_{i+1}$ and for two hyperplanes $h, h' \in X_i$ we have
    $w(h)/w(h') \in [1/n^2, n^2]$.
    Clearly, we have $t \le \min\{n, \log_n U\}$.
    Given a query point $q$, we can find the smallest index $i$ such that a hyperplane of $H_i$ passes
    below $q$, in $O(\log t \log n) = O(\log^2 n)$ time using Lemma~\ref{lem:maxu}.
    Also, by building a prefix sum, $W_i = \sum_{j \ge i} w(X_j)$, we can essentially
    focus on sampling from $X_i$, similar to Lemma~\ref{lem:general}.
    With a slight abuse of the notation, let us rename $X_i$ as $X$.
    Let $m = |X|$.
    We build a hierarchy of approximate levels by Lemma~\ref{lem:shallow}.

    We use Lemma~\ref{lem:AW}.
    Consider a triangle $\tau \in \L_i$ with vertices $v_1, v_2$, and $v_3$ where the level of each vertex is
    at most $O(k_i)$. 
    By the lemma, if $k_i \ge f(m)$, then we can find a representation of
    $\Delta(X,v_1)$, $\Delta(X,v_2)$, and $\Delta(X,v_3)$, each using $O(k_i/\log^2m)$ internal nodes.
    Since $\Delta(X,\tau) = \Delta(X,v_1) \cup\Delta(X,v_2) \cup  \Delta(X,v_3)$, we can 
    also represent $\Delta(X,\tau)$ using $O(k_i/\log^2m)$ internal nodes of $\Tg$. 
    At $\tau$ we store an alias data structure on the weights of the subtrees of $\Tg$ that represent $\Delta(X,\tau)$.
    This requires $O(k_i/ \log^2 m)$ space.
    Observe that at the query time, we can proportionally sample one of these subtrees and then using 
    Lemma~\ref{lem:tree} we can sample an element from the subtree.
    Thus, we can extract one random sample from $\Delta(X,\tau)$ in $O(1)$ worst-case time 
    using only $O(\Delta(X,\tau)/\log^2 m)$ space.
    The total size of the lists $\Delta(X,\tau)$ for all $\tau \in \L_i$ and all $i$ is $O(m\log m)$, 
    meaning, this in total will consume $O(m)$ space. 

    By Lemma~\ref{lem:shallow}, we can identify an index $i$ s.t., $q$ lies
    below a triangle $\tau \in \L_i$ with $|\Delta(H,\tau)| = O(|\Delta(H,q)|)$.
    It is clear that we can now sample from $\Delta(H,\tau)$ using tree $\Tg$.
    This satisfies all the conditions in  Lemma~\ref{lem:general} as 
    we can sample from $\Delta(H,\tau)$ in $O(1)$ time, after an initial $O(\log n)$ search time. 

    If $k_i < f(m)$, we can repeat a classical idea that was also used by Afshani
    and Wei and that is build an approximate 
    $f(m)$-level $\L$ and for each triangle $\tau \in \L$, we can repeat the above solution just one more time.
    We will be able to handle the indices $i$ such that $k_i \ge f(c(fm)) = o(\log m) = o(\log n)$. 
    However, if $\Delta(H,q) = o(\log n)$, then we can simply find all the hyperplanes passing below $q$ in 
    $O(\log n)$ time and build an alias data structure on them. 
\end{proof}

\subsection{Worst-Case Time with Approximate Weights}

All of the previous data structures sample items exactly proportional to their weight, but rely on rejection sampling.  Hence, their query time is expected, and not worst case.  With small probability these structures may repeatedly sample items which are not in $h$, and then need to reset and try again with no bound on the worst-case
time.
To achieve worst-case query time, we need some modifications.
We allow for items to be sampled ``almost'' proportional to their weights, i.e., we introduce
a notion of approximation.
As we shall see in the next chapter, without some kind of approximation, our task is very likely impossible.

\subparagraph*{Problem definition.}
We consider an input set $X$ of $n$ points where $x_i$ has weight $w_i$ and we would like to store  $X$ in a data structure.
At the query time, we are given a halfspace $h$ and a value $k$ and we would like to extract $k$ random samples
from $X \cap h$. 
Let $w(h) = \sum_{x_i \in X \cap h} w_i$, and set two parameters $0 < \gamma < \eps < 1$.  
Ideally, we would like to sample each $x_i$ with probability $w_i/w(h)$.    
Instead we sample $x_i$ with probability $\rho_i$, if $w_i/w(h) \geq \gamma/n$ then 
\begin{align}
  (1-\eps) \frac{w_i}{w(h)} \leq  \rho_i \leq (1+\eps) \frac{w_i}{w(h)},\label{eq:err}
\end{align}
and if $w_i/w(h) < \gamma/n$ then we must have $\rho_i \leq (1+\eps) \frac{w_i}{w(h)}$.
That is, we sample all items within a $(1\pm \eps)$ factor of their desired probability, except for 
items with very small weight, which could be ignored and not sampled.
The smaller items are such that the sum of their desired probabilities is at most $\gamma$.

\subparagraph*{Overview of modifications.}
We start in the same framework as Section~\ref{sec:expected} and Lemma \ref{lem:general}, i.e., we partition $X$ into subsets $X_\ell$ by weight.
Then, we need to make the following modifications:
(1) We can now ignore small enough weights; 
(2) We can no longer use rejection sampling to probe into each set $X_\ell$, rather we need to collect a bounded number (a function $f(\eps)$) of candidates from all $X_\ell$ which is guaranteed to contain some point in the query $h$.  
(3) We can also no longer use rejection sampling within each $X_\ell$ to get candidates, instead we build a stratified sample via the partition theorem.  

Change (1) is trivial to implement. 
Remember that given query $h$, we first identify indices $i$ and $i'$ such that $X_i$ contains the largest weight in $h$ and the weights in $X_{i'}$ are a factor $n^2$ smaller. 
We now require weights in $X_{i'}$ to be a factor $n/\gamma$ smaller, and let $i' = i+ t$ where $t=O(\log (n/\gamma))$. 
We can now ignore all the remaining sets: 
the sets $X_{j}$ with $j \geq i+t$ will have weights so small that even if there are $\Omega(n)$ points within, the sum of their weights will be at most $\gamma$ times the largest weight.  
Since $\gamma < \eps$, this implicitly increases all of the other weights in each $X_\ell$ (for $\ell \in [i,i+t)$) by at most a factor $(1+\eps)$.


We next describe how we can build a data structure on each $X_\ell$ to
generate $f(1/\eps)$ candidate points.  
Once we have these $t \cdot f(1/\eps)$
points, we can build two alias structures on them (they will come with
weights proportional to the probability they should be selected), and select
points until we find one in $h$.  
As a first pass, to
generate $k$ samples, we can repeat this $k$ times, or bring $k f(1/\eps)$
samples from each $X_\ell$.  
We will return to this subproblem to improve
the dependence on $k$ and $\eps$ by short-circuiting a low-probability event and reloading these points dynamically.

\subsubsection{Generating Candidate Points}
Here we will focus on sampling our set of candidates.
We will do this for every set $X_\ell$, $i \le \ell < i'$.
However, to simplify the presentation, we will assume that the input
is a set $X$ of $n$ points such that the weights of the points in $X$ are within
factor 2 of each other. 
We will sample $f(1/\eps)$ candidate
points from $X$ (representing a subset $X_\ell$) s.t.,
the set of candidates intersects with the query halfspace $h$.  
Each candidate will be sampled with a probability that is almost uniform, i.e., 
it fits within our framework captured by Eq.~\ref{eq:err}.

Let $H$ be the set of hyperplanes dual to points in $X$.
We maintain a hierarchical shallow-cutting of approximate levels
(Lemma~\ref{lem:shallow}) on $H$.
By Lemma~\ref{lem:shallow}, we get the following
in the primal setting (on $X$), given a query halfspace $h$:
We can build $O(|X|)$ subsets of $X$ where the subsets
have in total $O(|X| \log |X|)$ points.
Given a query halfspace $h$, in $O(\log n)$ time, we can find a subset $X'$
so that $X \cap h \subset X'$ and $|X'| =O( |X \cap h|)$.  
We now augment this structure with the following information on each such subset $X'$, without increasing the space complexity.  
We maintain an $r$-partition $(Z_1, \Delta_1), (Z_2, \Delta_2), \ldots (Z_{r'}, \Delta_{r'})$ on $X'$.  (That is, so $r' = \Theta(r)$, each subset $Z_j \subset \Delta_j$ and has size bound $|X'|/r \leq |Z_j| \leq 2|X'|/r$, and the boundary of any halfspace $h$ intersects at most $O(r^{2/3})$ cells $\Delta_j$.)
For each $Z_j$ we maintain an alias structure so in $O(1)$ time we can generate a random $s_j$ from the points within.  It is given a weight $W_j = \sum_{x \in Z_j} w(x)$.  In $O(r)$ time we can generate a weighted set $S = \{s_1, s_2, \ldots, s_{r'}\}$; this will be the candidate set. 



The sum of all weights of points within $h$ is $w(h) = \sum_{x \in X \cap h} w(x)$, and so we would like to approximately sample each $x \in X' \cap h$ with probability $w(x)/w(h)$.  

\begin{lemma}
Let $r = \Omega(1/\eps^3)$ and consider a candidate set $S$, and sample one point proportional to their weights.   For a point $x \in X' \cap h$, the probability $\rho_x$ that it is selected satisfies
\[
(1-\eps) \frac{w(x)}{w(h)} \leq \rho_x \leq (1+\eps) \frac{w(x)}{w(h)}.
\]
\end{lemma}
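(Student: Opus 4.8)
The plan is to analyze the sampling probability $\rho_x$ for a fixed point $x \in X' \cap h$ by decomposing the event ``$x$ is selected'' according to which candidate cell $Z_j$ produced the candidate $s_j$ that happened to equal $x$. Since the candidate set $S = \{s_1, \ldots, s_{r'}\}$ is built by drawing one point $s_j$ from each $Z_j$ proportional to weight (with total cell weight $W_j$), and then one element of $S$ is selected proportional to its weight, we have
\[
\rho_x = \sum_{j : x \in Z_j} \frac{W_j}{\sum_{\ell} W_\ell} \cdot \frac{w(x)}{W_j} \cdot \prob{s_j = x} \;=\; \frac{w(x)}{\sum_\ell W_\ell} \sum_{j : x \in Z_j} \frac{w(x)/W_j}{1},
\]
but more carefully: $s_j = x$ with probability $w(x)/W_j$, and then $s_j$ gets weight $W_j$, so the chance $s_j$ is the final pick is $W_j / (\sum_\ell W_\ell)$ only if $s_j$ lands in $h$ — the normalization in the final alias step is over $\{s_\ell : s_\ell \in h\}$, not over all $s_\ell$. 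So the first step is to pin down this normalization: the final selection is proportional to $W_j$ over exactly those $j$ with $s_j \in h$. Writing $A = \sum_{\ell : s_\ell \in h} W_\ell$ for the (random) normalizer, we get $\rho_x = \E\big[ \sum_{j : x \in Z_j} \frac{W_j}{A}\cdot\frac{w(x)}{W_j} \big] = w(x)\,\E\big[ \tfrac{1}{A} \cdot \#\{j : x \in Z_j\}\big]$; since a point lies in exactly one $Z_j$, this is $w(x)\,\E[\mathbbm{1}_{s_{j(x)} = \cdot}\text{-conditioning}]$ — I'll clean this up, the point is $\rho_x$ is essentially $w(x) \cdot \E[1/A']$ where $A'$ is the normalizer on the event that $x$ itself is the drawn representative of its cell.

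The crux is then to show $A'$ (equivalently $A$) concentrates around $w(h)$ within a $(1\pm\eps)$ factor with the right probabilities. Here I would split the cells into two groups: \emph{interior} cells, where $\Delta_j \subseteq h$ (so $Z_j \cap X' \subseteq h$ and the representative $s_j$ is certainly in $h$, contributing exactly $W_j$ to $A$), and \emph{boundary} cells, where $\partial h$ crosses $\Delta_j$. By the partition theorem (Theorem, the Partition Theorem) there are only $O(r^{2/3})$ boundary cells, and each has weight $W_j \leq 2|X'|\cdot w_{\max}/r = O(w(h)/r)$ using that the weights are within a factor $2$ and $|X'| = O(|X\cap h|)$ so $w(h) = \Theta(|X\cap h| \cdot w_{\max}) = \Theta(|X'| w_{\max})$. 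Hence the total boundary weight is $O(r^{2/3} \cdot w(h)/r) = O(w(h)/r^{1/3})$. Choosing $r = \Omega(1/\eps^3)$ makes this $O(\eps\, w(h))$. The interior cells alone contribute $w(h) - (\text{weight of }X'\cap h\text{ in boundary cells}) \geq w(h) - O(\eps w(h))$, and $A$ is at most $w(h) +$ (weight of points outside $h$ but in boundary cells that got picked) $\leq w(h) + O(\eps w(h))$. So deterministically $A \in [(1-c\eps)w(h), (1+c\eps)w(h)]$ for an absolute constant $c$; rescaling $\eps$ by $1/c$ (absorbed into the $\Omega(1/\eps^3)$ constant) gives $A \in [(1-\eps)w(h), (1+\eps)w(h)]$ always. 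Then $\rho_x = w(x) \cdot \frac{1}{A}$-type expression lies in $[(1-\eps)\tfrac{w(x)}{w(h)}, (1+\eps)\tfrac{w(x)}{w(h)}]$ after one more $(1\pm O(\eps))$ manipulation, and I'd again absorb constants.

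I expect the main obstacle to be bookkeeping the normalization cleanly: the final alias structure normalizes over $\{s_\ell : s_\ell \in h\}$, and whether a boundary cell's representative lands in $h$ is itself random and correlated with which point is drawn in that cell, so the ``$\E[1/A]$'' needs care. The clean way around this is the deterministic two-sided bound on $A$ above — since $A$ is sandwiched between $(1\pm\eps)w(h)$ \emph{pointwise} (over all randomness), no expectation argument over $1/A$ is actually needed; I just need: (i) every interior point of $h$'s cell contributes its full weight to the potential normalizer, (ii) boundary cells carry total weight $O(\eps w(h))$, and (iii) points of $X$ outside $h$ only enter $A$ through boundary cells. Points (i)–(iii) are exactly what the $r$-partition guarantees plus the near-uniform-weight assumption, so the whole argument reduces to the weight accounting in the previous paragraph. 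A secondary subtlety is the lower bound on $w(h)$ relative to $|X'| w_{\max}$: this uses $X \cap h \subseteq X'$, $|X'| = O(|X \cap h|)$, and weights within a factor $2$, which together give $w(h) = \Theta(|X'|\, w_{\max})$ as needed.
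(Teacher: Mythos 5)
Your proposal is correct and uses essentially the same argument as the paper's proof: classify cells as inside/straddling/outside, observe that the (random) normalizer lies deterministically in the interval $[W_{\text{in}}, W_{\text{in}}+W_{\text{str}}]$ which also contains $w(h)$, and bound the ratio of the interval endpoints by $1+O(\eps)$ via the partition theorem's $O(r^{2/3})$ crossing bound together with the near-uniform cell weights. The only cosmetic difference is that you normalize against $w(h)$ directly while the paper normalizes against $W_{\text{in}}$, and you spell out the weight-accounting step $W_{\text{str}} = O(r^{2/3}\cdot w(h)/r) = O(\eps\, w(h))$ a bit more explicitly than the paper's $\frac{W_{\text{in}}+W_{\text{str}}}{W_{\text{in}}} = 1+O(r^{-1/3})$ shortcut.
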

\begin{proof}
Of the $r'$ cells in $S$, classify them in sets 
as \emph{inside} if $\Delta_j \in h$, 
as \emph{outside} if $\Delta_j \cap h = \emptyset$, and
as \emph{straddling} otherwise.  
We can ignore the outside sets.  There are $O(r)$ inside sets, and $O(r^{2/3})$ straddling sets.  

For point $x \in S_j$ from an inside set, it ideally should be selected with probability $\frac{w(x)}{W_j} \cdot \frac{W_j}{w(h)} = \frac{w(x)}{w(h)}$.  Indeed it is the representative of $S_j$ with probability $\frac{w(x)}{W_j}$ and is give weight proportional to $W_j$ in the alias structure.  
We now examine two cases, that all representative points in the straddling sets are in $h$, and that none are; the probability $x$ is selected will be between the probability of these two cases, and the desired probability it is selected will also be between these two cases.  
Let $W_{\text{\em in}} = \sum_{S_j \text{ is \emph{inside}}} W_j$ and $W_{\text{\em str}} = \sum_{S_j \text{ is \emph{straddling}}} W_j$.  The probability $x$ is selected if it is the representative of $S_j$ is then in the range 
$[\frac{W_j}{W_{\text{\em in}} + W_{\text{\em str}}} , \frac{W_j}{W_{\text{\em in}}}]$.  
The ratio of these probabilities is 
$\frac{W_j}{W_{\text{\em in}}} \cdot \frac{W_{\text{\em in}} + W_{\text{\em str}}}{W_j} = \frac{W_{\text{\em in}} + W_{\text{\em str}}}{W_{\text{\em in}}} = 1 + \frac{O(r^{2/3})}{\Theta(r)} = 1 + O(r^{1/3})$.  
Setting $r = \Omega(1/\eps^3)$ ensures that these probabilities are within a $(1+\eps)$-factor of each other, and on all points from an inside set, are chosen with approximately the correct probability.  

For a point $x$ in a straddling set $S_j$, it should be selected with probability $\frac{w(x)}{W_j}$ and is selected with probability between $L_x = \frac{w(x)}{W_j} \frac{W_j}{W_{\text{\em in}} + W_{\text{\em str}}}$ and $U_x = \frac{w(x)}{W_j} \frac{W_j}{W_{\text{\em in}} + W_j}$.  As with points from an inside set, these are within a $(1+\eps)$-factor of each other if $r = \Omega(1/\eps^3)$.  And indeed since $W_{\text{\em in}} \leq w(h) \leq W_{\text{\em in}} + W_{\text{\em str}}$ then $L_x \leq \frac{w(x)}{w(h)} \leq U_x$, and the desired probability of sampling straddling point $x$ is in that range.    
\end{proof}

\subsubsection{Constructing the $k$ Samples}
To select $k$ random samples, the simplest way is to run this procedure $k$ times sequentially, generating $O(k/\eps^3)$ candidate points;  this is on top of $O(\log n)$ to identify the proper subset $X'$ from the shallow cutting, applied to all $t = O(\log (n/\gamma))$ weight partitions $X_\ell$.  

We can do better by first generating $O(1/\eps^3)$ candidate points per $X_\ell$, enough for a single random point in $h$.  Now we place these candidates in two separate alias structures along with the candidate points from the other $t$ structures.  There are $O(t/\eps^3)$ candidate points placed in an \emph{inside} alias structure, and $O(t/\eps^2)$ points placed in a \emph{straddling} alias structure.  
Now to generate one point, we flip a coin proportional to the total weights in the two structures.  If we go to the \emph{inside} structure, we always draw a point in $h$, we are done.  If we go to the straddling structure, we may or may not select a point in $h$.  If we do, we are done; if not we flip another coin to decide to go to one of the two structures, and repeat.  

It is easy to see this samples points with the correct probability, but it does not yet have a worst case time.  We could use a dynamic aliasing structure~\cite{hmm} on the straddling set, so we sample those without replacement, and update the coin weight.  However, this adds a $O(\log(\frac{1}{\eps} \log (n/\gamma)))$ factor to each of $O(t/\eps^2)$ steps which might be needed.  A better solution is to only allow the coin to direct the sampler to the straddling sets at most once; if it goes there and fails, then it automatically redirects to the alias structure on the inside sets which must succeed.  
This distorts the sampling probabilities, but not by too much since in the rejection sampling scheme, the probability of going to the straddling set even once is $O(\eps)$.  

\begin{lemma}
For any candidate point $s$ let $\rho_s$ be the probability it should be sampled, and $\rho_s'$ the probability it is sampled with the one-shot deterministic scheme.  Then $\rho_s \leq \rho_s' \leq (1+\eps) \rho_s$ if $s$ is an inside point and $(1-\eps) \rho_s \leq \rho_s' \leq \rho_s$ if $s$ is from a straddling set.  
\end{lemma}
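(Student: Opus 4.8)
The plan is to couple the one-shot process with the ideal rejection process on a shared prefix of randomness, and then exploit the memorylessness of the rejection process to obtain a closed form for $\rho_s$. Fix notation: let $W_{\text{in}}$ and $W_{\text{str}}$ denote the total weights in the \emph{inside} and \emph{straddling} alias structures, aggregated over all $t=O(\log(n/\gamma))$ weight classes $X_\ell$; put $W=W_{\text{in}}+W_{\text{str}}$, and let $\sigma$ be the probability that a single draw from the straddling structure lands in $h$. The top coin sends the sampler to the inside structure with probability $A=W_{\text{in}}/W$ and to the straddling structure with probability $B=W_{\text{str}}/W=1-A$. The one quantitative fact I will need is $W_{\text{str}}\le\eps\,W_{\text{in}}$: the proof of the preceding lemma shows $W_{\text{str}}^{(\ell)}/W_{\text{in}}^{(\ell)}=O(r^{-1/3})$ within each class $X_\ell$ (that is exactly the $r$-partition bound with $r=\Omega(1/\eps^3)$), and summing over the $t$ classes preserves the ratio since numerator and denominator are both additive; in particular $B\le W_{\text{str}}/W_{\text{in}}\le\eps$.

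Next I would analyze the ideal process, which is memoryless: an ``attempt'' (one coin flip together with the single draw it triggers) fails exactly when the coin selects the straddling structure and the draw misses $h$, an event of probability $q:=B(1-\sigma)$, and conditioned on a failure the process restarts. Hence, writing $p_s$ for the probability that candidate $s$ is returned within a single attempt, $\rho_s=p_s/(1-q)$. A direct check gives $p_s=A\cdot w(s)/W_{\text{in}}=w(s)/W$ when $s$ is an inside point, and $p_s=B\cdot(W_j/W_{\text{str}})\cdot(w(s)/W_j)=w(s)/W$ when $s\in Z_j\cap h$ is a straddling point (and $p_s=0$ whenever the relevant draw would miss $h$).

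Then I would analyze the one-shot process by running it with the same randomness up to and including the first straddling draw. On the events ``coin picks inside'' and ``coin picks straddling and the draw lands in $h$'', both processes return the same element, contributing exactly $p_s$ to $\rho_s'$; the only divergence is on the failure event (probability $q$), after which the one-shot process takes one forced draw from the inside structure. Thus $\rho_s'=p_s+q\cdot\mathbf{1}\{s\ \text{inside}\}\cdot w(s)/W_{\text{in}}$. For a straddling point this is $\rho_s'=p_s=(1-q)\rho_s$, and $0\le q\le B\le\eps$ gives $(1-\eps)\rho_s\le\rho_s'\le\rho_s$. For an inside point, substituting $w(s)/W_{\text{in}}=p_s/A$ and $p_s=(1-q)\rho_s$ and simplifying with $1-A=B$ and $B-q=B\sigma$, a short computation gives
\[
\frac{\rho_s'}{\rho_s}=(1-q)\Bigl(1+\tfrac{q}{A}\Bigr)=1+\frac{q(B-q)}{A}=1+\frac{\sigma(1-\sigma)\,B^2}{A}\ \ge\ 1,
\]
so $\rho_s'\ge\rho_s$; and since $\sigma(1-\sigma)\le\tfrac14$ and $B^2/A=W_{\text{str}}^2/(W\,W_{\text{in}})\le(W_{\text{str}}/W_{\text{in}})^2\le\eps^2$, the ratio is at most $1+\eps^2/4\le1+\eps$. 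Both inequalities follow.

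The step I expect to be most delicate is the aggregation bookkeeping: verifying that $W_{\text{str}}\le\eps\,W_{\text{in}}$ really survives merging the candidate sets of all $O(\log(n/\gamma))$ weight classes into the two global alias structures (it does, since the class-wise estimate is additive), and confirming that the ignored ``outside'' cells and the discarded light classes $X_\ell$ with $\ell\ge i'$ contribute nothing to $\rho_s$ or $\rho_s'$, so that the only distortion relative to the exact rejection scheme is the $(1\pm\eps)$ factor isolated above. Everything else is the elementary manipulation just sketched.
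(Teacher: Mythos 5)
Your proof is correct and follows the same approach as the paper: derive closed-form expressions for $\rho_s$ (rejection) and $\rho_s'$ (one-shot) in terms of the coin bias and the success probability of a straddling draw, then compare them directly, using $W_{\text{str}}/W_{\text{in}}=O(\eps)$ from the preceding lemma. Your ``per-attempt'' reformulation $\rho_s=p_s/(1-q)$ with $p_s=w(s)/W$ is a slightly cleaner unification of the two cases, and in the inside case it yields the exact ratio $1+\sigma(1-\sigma)B^2/A\le 1+\eps^2/4$, which is a tighter (though not needed) bound than the paper's $1+O(\eps)$.
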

\begin{proof}
The probability that the coin directs to the inside set is 
$\pi_\textit{in} = \frac{W_\textit{in}}{W_\textit{in} + W_\textit{str}} = \frac{1}{1+W_\textit{str}/W_\textit{in}} = \frac{1}{1+\Omega(\eps)} = 1-O(\eps)$.  
Let $w_\textit{str}$ be the probability of selecting a point inside $h$, given that the coin has directed to the straddling set; we only need that $W_\textit{str} \in [0,1]$.  

For a candidate point in the inside set $s$, the probability it is selected in the rejection sampling scheme is $\rho_s = \frac{w(s)}{W_\textit{in}} (1-O(\eps))$, and in the deterministic scheme is $
\rho_s' = \frac{w(s)}{W_\textit{in}} (\pi_\textit{in} + (1-\pi_\textit{in}) (1-w_\textit{str}))$ which is in the range $[\rho_s, \frac{w(s)}{W_\textit{in}}]$, and these have a ratio $1+O(\eps)$.  

Similarly, for a candidate point in the straddling set $s$, the probability it is selected in the rejection sampling scheme is 
\[
\rho_s = \frac{w(s)}{W_\textit{str}} (1-\pi_\textit{in})(1 + W_\textit{str} (1-\pi_\textit{in})(1 + \ldots)) =\frac{w(s)}{W_\textit{str}} (1-\pi_\textit{in})(1+O(\eps)) 
\]
and in the deterministic scheme is
$
\rho'_s = \frac{w(s)}{W_\textit{str}} (1-\pi_\textit{in}).
$
Thus $\rho'_s$ is in the range $[\rho_s(1-O(\eps)), \rho_s]$.  Adjusting the constant coefficients in $\eps$ elsewhere in the algorithm completes the proof.  
\end{proof}

Now to generate the next independent point (which we need $k$ of), we do not need to re-query with $h$ or rebuild the alias structures.  In particular, each candidate point can have a pointer back to the alias structure within its partition cell, so it can replace its representative candidate point.  Moreover, since the points have weight proportional to their cell in the partition $W_j$, these weights do not change on a replacement. And more importantly, the points we never inspected to see if they belong to $h$ do not need to be replaced.  
This deterministic process only inspects at most $2$ points, and these can be replaced in $O(1)$ time.  Hence extending to $k$ samples, only increases the total runtime by an additive term $O(k)$.

\subparagraph*{Final bound.}

We now have the ingredients for our final bound.  In general the argument follows that of Lemma~\ref{lem:general} 
except for a few changes.  
First, we allow items with probability total less than $\gamma n$ to be ignored.  This replaces a $\log n$ factor in query time to be replaced with $t=\log (n/\gamma)$ term.  
Second, we require $O(t \log n)$ time to identify the relevant subset $X'$ in each of $t$ shallow-cutting structures. 
Then we select $O(1/\eps^3)$ candidate points from each of $t$ weight partitions, in $O(t/\eps^3)$ time.  Then pulling $k$ independent samples, and refilling the candidates takes $O(k)$ time.  
This results in the following final bound:

\begin{theorem}
Let $X$ be a set of $n$ weighted points in $\mathbb{R}^3$, where the smallest weight is $1$ and the largest weight is $U$.  Choose $0 < \gamma < \eps \leq 1/2$.  We can store $X$ in a data structure of size $O(n \min\{\log n, \log \log_n U\})$ such that for any integer $k$, we can extract $k$ independent random samples from the points $(\eps,\gamma)$-approximately proportional to their weights in worst case time 
$O(\log (n/\gamma) (\log n + 1/\eps^3) + k)$
\end{theorem}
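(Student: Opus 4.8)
The plan is to assemble the pieces that have been developed in this subsection into a single data structure and run the cost accounting. First I would set up the weight partition exactly as in Lemma~\ref{lem:general}: split $X$ into subsets $X_\ell$ where all weights within a subset are within a factor $2$, so that consecutive $X_\ell$ have geometrically decreasing weights. As in Theorem~\ref{thm:3dexp}, the number of ``super-buckets'' we group these into is $O(\min\{\log n, \log\log_n U\})$, which controls the space. On each $X_\ell$ I would build the shallow-cutting-based structure of Lemma~\ref{lem:shallow} augmented with the $r$-partition and per-cell alias structures described in the ``Generating Candidate Points'' subsection; the key point to check is that across all $X_\ell$ this adds only $O(n\log n)$ storage (the $r$-partition on a subset of size $m$ is $O(m)$, and summing the shallow-cutting conflict-list sizes gives $O(n\log n)$), and then grouping into super-buckets keeps it at $O(n \min\{\log n, \log\log_n U\})$.

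Next I would describe the query. Given $h$ and $k$: (i) use Lemma~\ref{lem:maxu} / Corollary~\ref{cor:max} to locate the heaviest point in $h$, hence the index $i$ with $X_i$ containing it, in $O(\log^2 n)$ time; (ii) truncate at $i' = i + t$ with $t = O(\log(n/\gamma))$, discarding all lighter buckets — their total weight is at most $\gamma$ times the max weight, so by $\gamma < \eps$ this only perturbs every surviving probability by a $(1+\eps)$ factor, which folds into the error budget; (iii) for each of the $t$ surviving $X_\ell$, spend $O(\log n)$ to find the subset $X'$ from the shallow cutting with $X\cap h \subseteq X' $ and $|X'| = O(|X\cap h|)$, then $O(1/\eps^3)$ time to read off a weighted candidate set $S_\ell$ of $O(1/\eps^3)$ representative points (one per $r$-partition cell, with $r = \Theta(1/\eps^3)$); (iv) merge all candidates into one \emph{inside} alias structure of size $O(t/\eps^3)$ and one \emph{straddling} alias structure of size $O(t/\eps^2)$, in $O(t/\eps^3)$ time; (v) run the one-shot deterministic sampler that inspects at most two points per sample; (vi) repeat for each of the $k$ samples, replacing only the $O(1)$ inspected candidates via back-pointers into their partition cells in $O(1)$ time each. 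Summing gives $O(\log^2 n) + O(t\log n) + O(t/\eps^3) + O(k) = O(\log(n/\gamma)(\log n + 1/\eps^3) + k)$.

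The correctness argument is a composition of the two lemmas already proved in this subsection. Each candidate point from an inside cell carries probability within $(1\pm\eps)$ of $w(x)/w(h)$, and each from a straddling cell within $(1\pm\eps)$ of $w(x)/W_j$, which brackets its ideal probability; the one-shot deterministic sampler then distorts these by a further $(1\pm O(\eps))$ factor, and the truncation step by another $(1+\eps)$ factor. I would absorb all these into a single renamed $\eps$ by adjusting constants, and note that any point $x$ with $w_x/w(h) < \gamma/n$ lives in a discarded bucket, so it is sampled with probability $0 \le (1+\eps)w_x/w(h)$, matching the relaxed requirement in the problem definition. Independence across the $k$ samples follows because the only state carried between samples is the refreshed candidate set, and each refreshed representative is drawn independently from its cell's alias structure, so conditioned on $h$ the samples are i.i.d.

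The main obstacle, I expect, is not any single inequality but the bookkeeping that makes the error composition clean: one has to be careful that the truncation perturbation, the $r$-partition straddling error, and the one-shot-sampler distortion all push probabilities in controlled directions and that their product still fits inside a single $(1\pm\eps)$ envelope after reparametrizing — in particular that the asymmetric bounds (inside points can only be inflated, straddling points only deflated) are consistent with the $(1\pm\eps)$ two-sided requirement of Eq.~\ref{eq:err} for the non-negligible points. The geometric/storage side is comparatively routine given Lemmas~\ref{lem:shallow} and the Partition Theorem, and the time accounting is a straightforward sum once the per-$X_\ell$ costs are pinned down.
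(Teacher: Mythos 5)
Your proposal follows essentially the same route as the paper's compressed ``Final bound'' paragraph: partition by weight, truncate at $t=O(\log(n/\gamma))$, run the range-max, then per surviving $X_\ell$ locate the shallow-cutting subset $X'$ and pull $O(1/\eps^3)$ candidates from its $r$-partition, feed them into the two alias structures, apply the one-shot deterministic sampler, and refill $O(1)$ candidates per sample to amortize to $O(k)$. The time accounting $O(\log^2 n + t\log n + t/\eps^3 + k) = O(\log(n/\gamma)(\log n + 1/\eps^3) + k)$ is correct (the range-max term is absorbed because $t\geq\log n$), and your composition of the three error sources ($(1+\eps)$ from truncation, $(1\pm\eps)$ from the $r$-partition, $(1\pm O(\eps))$ from the one-shot scheme) into a renamed $\eps$ is the same bookkeeping the paper intends. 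One soft spot you inherit directly from the paper: the space claim $O(n\min\{\log n, \log\log_n U\})$. In Theorem~\ref{thm:3dexp} this bound is achieved by replacing explicit conflict lists with $O(|\Delta|/\log^2 m)$ implicit references into the tree $\Tg$ of Lemma~\ref{lem:AW}; but the worst-case construction as described stores an explicit $r$-partition plus per-cell alias structure on each shallow-cutting subset $X'$, and those sum to $O(n\log n)$ regardless of $U$. Your phrase ``grouping into super-buckets keeps it at $O(n\min\{\ldots\})$'' does not actually close this gap — the $r$-partitions still need the explicit subsets. The paper's own one-paragraph proof does not address this either, so it is not a flaw unique to your write-up, but if you were to turn this into a full proof you would need to say either that the bound is really $O(n\log n)$ here, or explain how to build the $r$-partitions on top of the implicit $\Tg$ representation.
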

In particular, if $\eps,\gamma = \Omega(1)$, then the worst case time is $O(\log^2 n + k)$ matching the expected time algorithm to sample by the exact weights.


\section{Lower Bound for Worst-Case Time with Exact Weights}
\label{sec:lb}
In this section, we focus on proving a (conditional) lower bound for a data structure that can extract one random sample in $Q(n)$ worst-case
time using $S(n)$ space.
Our main result is that under a reasonably restricted model, the data structure must essentially solve an equivalent range searching problem
in the ``group model''. As a result, we get a conditional lower bound as this latter problem is conjectured to be difficult.
As an example, our conditional lower bound suggests that halfspace range sampling queries would require that 
$S(n) Q^3(n) = \Omega(n^3)$, i.e., with near-linear space we can only expect to get close to $O(n^{2/3})$ query time. 
This is in contrast to the case when we allow expected query time or approximate weights. 
As already shown, we can solve the same problem with $O(n \log n)$ space and $O(\log^2 n)$ query time which reveals a large polynomial gap
between the expected and the worst-case variants of the problem.
To our knowledge, this is the first time such a large gap appears in the range searching literature between the worst-case and expected query times. 

\subparagraph{The Model of Computation.}
We assume the input is a list $X$  of $n$ elements, $x_1, \cdots, x_n$, where each element $x_i$ is associated with a real-valued weight $w(x_i)$. 
We use the decision tree model of computation. 
We assume the data structure has three components: a preprocessing algorithm that builds the data structure, a data structure which is
a set of $S(n)$  stored real values, and finally a query algorithm that given a query $q$ it returns an element sampled with the correct probability
in  $Q(n)$ worst-case time.
We allow no approximation: the query algorithm should return an element $x$ with exactly $w(x)/w(X)$ probability. 

\subparagraph{The main bottleneck.}
The challenge in obtaining our lower bound was understanding where the main computational bottleneck lied
and formulating a plan of attack to exploit it.
This turned out to be in the query algorithm. 
As a result, we place
no restrictions on the storage, or the preprocessing part of the algorithm, an
idea that initially sounds very counter intuitive. 
After giving the algorithm the input $X$ together with the weight assignment $w\colon X \to  \R$, the algorithm
stores some $S(n)$ values in its storage. 
Afterwards, we move to the query phase and this is where we would like to put reasonable limits.
We give the query algorithm the query $q$. 
We allow the query algorithm access to a set of $t$ real random numbers, $R= \{ r_1, \cdots, r_t\}$, generated uniformly in $[0,1]$. 
We restrict the query algorithm to be a ``binary decision tree'' $T$ but in a specialized format:
each  node $v$ of $T$ involves a comparison between some random number $r_v \in R$ and a rational function 
$f_v = g_v/G_v$ where $g_v$ and $G_v$ are $n$-variate polynomials of the input weights $w_1, \cdots, w_n$.
To be more precise, we assume the query algorithm can compute polynomials $g_v$ and $G_v$ 
(either using polynomials
stored at the ancestors of $v$ or using the values stored by the data structure).
The query algorithm at node $v$ computes the ratio $g_v$ and $G_v$
and compares it to the random value $r_v \in \{r_1, \cdots, r_t\}$.
If $v$ is an internal node, then $v$ will have two children $u_1$ and $u_2$ and the algorithm will proceed to $u_1$ if
$f_v(w(x_1), \cdots, w(x_n)) < r_v$ and to $u_2$ if otherwise. 
If $v$ is a leaf node, then $v$ will return (a fixed) element $x_v \in X$. 
Note that there is no restriction on the rational function $f_v$; it could be of an arbitrary size or complexity.
We call this the \mdef{separated Algebraic decision tree} (SAD) model since at
each node, we have ``separated'' the random numbers
from the rational function that involves the weights of the input elements; a more general
model would be to allow a rational function of the weights $w_i$ and the random numbers $r_j$.
If we insist the polynomials $g_v$ and $G_v$ be linear (i.e., degree one), then we call the
model \mdef{separated linear decision tree} model (SLD).

\begin{theorem}
  Consider an algorithm for range sampling in the SAD model where the input is a
  list of  $n$ of elements, $x_1, \cdots, x_n$ together with a weight assignment $w(x_i) \in \R$, for
  each $x_i$.
  Assume that the query algorithm has a worst-case bound, i.e., the maximum
  depth of its decision tree $T$ is bounded by a function $d(n)$.
  Then, for every query $q$, there exists a node $v \in T$ such that $G_v$ is divisible by the polynomial
  $\sum_{p \in q}w(p)$. 
\end{theorem}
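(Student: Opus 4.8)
\textbf{Proof plan.} The idea is to fix a query $q$, set $L := \sum_{p\in q} w(p)$ (a \emph{linear}, hence irreducible, polynomial in $\R[w_1,\dots,w_n]$), and run the query algorithm at a sufficiently generic weight vector so as to read off a rational-function identity that forces $L$ to divide one of the denominators $G_v$. We may assume $|q|\ge 2$ (the cases $|q|\le 1$ are degenerate, since then the correct answer distribution is constant). First I would choose a positive weight vector $w^\star\in(0,\infty)^n$ that is \emph{generic}, meaning it avoids the finitely many proper hypersurfaces $\{G_v=0\}$, $\{f_v=f_{v'}\}$ (whenever $f_v\not\equiv f_{v'}$ as rational functions), $\{f_v=0\}$, $\{f_v=1\}$, and $\{f_v=f_{v'}\}$ for all pairs of nodes $v,v'$ of $T$; such $w^\star$ exists because the complement of this union is dense and open and meets the full-dimensional set $(0,\infty)^n$. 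On a small open neighborhood $U$ of $w^\star$ every $f_v=g_v/G_v$ is a well-defined continuous rational function and the linear order of the quantities $\{0,1\}\cup\{f_v(w):v\in T\}$ is constant on $U$.

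Next I would show that on $U$ the probability $\rho_x(w):=\prob{\text{the algorithm outputs } x\mid w}$, for a fixed $x\in q$, is a single rational function of $w$ whose denominator divides $D(w):=\prod_{v\in T}G_v(w)$. The crucial point — and the place where the possible reuse of a single random number $r_j$ at several nodes is handled — is the following. For each root-to-leaf path, group its nodes by which $r_j$ they query; following that path conditions each such $r_j$ to lie in an interval cut out by constraints of the form $r_j>f_v(w)$ or $r_j\le f_v(w)$ (equivalently, by the clamped values $\tilde f_v(w):=\max(0,\min(1,f_v(w)))$). On $U$ the identity of the extreme constraint at each $r_j$ is fixed, so the length of that interval is one of $\tilde f_{v_b}(w)$, $1-\tilde f_{v_a}(w)$, $\tilde f_{v_b}(w)-\tilde f_{v_a}(w)$, $0$, or $1$, and on $U$ each $\tilde f_v$ is a fixed member of $\{0,1,g_v/G_v\}$; hence each such length is a rational function with denominator dividing a product of at most two of the $G_v$'s. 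Since distinct random numbers are independent, the probability of following the path is the product of these lengths, with denominator dividing $\prod_{v\text{ on the path}}G_v$; summing over all leaves labelled $x$ keeps the denominator dividing $D(w)$. Thus $\rho_x(w)=A(w)/D(w)$ on $U$ for some $A\in\R[w_1,\dots,w_n]$.

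Finally, correctness of the data structure on the full-dimensional set of positive weight assignments forces $A(w)/D(w)=w(x)/L(w)$ throughout $U$, hence as an identity of rational functions; clearing denominators in the UFD $\R[w_1,\dots,w_n]$ gives $w(x)\,D(w)=L(w)\,A(w)$. Since $L$ is irreducible and does not divide $w(x)=w_x$ (using $|q|\ge 2$, so $L$ involves a variable other than $w_x$), $L$ must divide $D(w)=\prod_{v\in T}G_v(w)$, and therefore $L\mid G_v$ for some node $v\in T$, which is exactly the claim.

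I expect the middle step to be the main obstacle: carefully establishing that, locally around $w^\star$, the output distribution is genuinely a single rational function (not merely piecewise rational) with the asserted denominator, while correctly accounting both for repeated use of the same random number across nodes and for the clamping of $f_v$-values lying outside $[0,1]$. The surrounding steps — producing a sufficiently generic $w^\star$, passing from equality on an open set to an identity of rational functions, and the divisibility argument in the UFD — are routine once this local normal form is in place.
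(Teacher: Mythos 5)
Your proposal is correct and takes essentially the same approach as the paper: fix a sufficiently generic weight vector so that the probability of outputting a given $x\in q$ becomes a single rational function whose denominator divides $\prod_{v\in T}G_v$, observe that correctness forces this ratio to equal $w(x)/\sum_{p\in q}w(p)$ on an open set and hence identically, and conclude by irreducibility of the linear form in the UFD $\R[w_1,\dots,w_n]$ that it must divide some $G_v$. If anything you are a bit more careful than the paper, e.g.\ by explicitly handling the clamping of $f_v$-values to $[0,1]$ and the degenerate case $|q|\le 1$.
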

\begin{proof}
  Consider the query decision tree $T$.
  By our assumptions, $T$ is a finite tree that involves some $N(n)$ nodes and it has the maximum  depth of $d(n)$.
  Each node $v$ involves a comparison between some random number $r_i$ and a rational function $f_v(w(x_1), \cdots, w(x_t))$.
  To reduce clutter, we will simply write the rational function as $f_v$.
  Let $\P$ be the set of all the rational functions stored at the nodes of $T$. 
  Note that each unique rational function appears only once in $\P$. 
  Consider the set $\Delta_\P = \{ f_1 - f_2 | f_1, f_2 \in \P\}$  which is the set of pairwise differences.
  As each unique rational function appears only once in $\P$, it follows that none of the rational functions in $\Delta_\P$ is identical to zero.
  This in particular implies that we can find real values $w_1, \cdots, w_n$ such that none of the rational functions in $\Delta_\P$ are zero
  on $w_1, \cdots, w_n$. 
  Thus, as these functions are continuous at the points $(w_1, \cdots, w_n)$, it follows that we can find a real value $\varepsilon > 0$ such that
  for every weight assignment $w(x_i) \in [w_i, w_i + \varepsilon]$, the rational functions in  $\Delta_\P$ have the same (none zero)
  sign. 
  In the rest of the proof, we only focus on the weight assignment functions $w$ with the property that
  $w(x_i) \in [w_i, w_i + \varepsilon]$.
  
  Let $\U$ be the unit cube in $\R^t$.
  $\U$ denotes the total probability space that corresponds to the random variables $r_1, \cdots, r_t$.
  Every point in $\U$ corresponds to a possible value for the random variables $r_1, \cdots, r_t$.
  Now consider one rational function $f_v$ stored at a node $v$ of $T$, and assume $v$ involves a comparison between $r_i$
  and $f_v$ and let $u_1$ and $u_2$ be the left and the right child of $v$. 
  By our assumption, we follow the path to $u_1$ if $f_v \le r_i$ but to $u_2$ if otherwise. 
  Observe that this is equivalent to partitioning $\U$ into two regions by a hyperplane perpendicular to the $i$-th axis
  at point $f_v$.
  As a result, for every node $v \in T$, we can assign a rectangular subset of $\U$ that denotes its \mdef{region} and it includes
  all the points $(r_1, \cdots, r_t)$ of $\U$ such that we would reach node $v$ if our random variables were sampled to be
  $(r_1, \cdots, r_t)$.

  The next observation is that we can assume the region of each node is defined by \mdef{fixed} rational functions.
  Consider the list of rational functions encountered on the path from $v$ to the root of $T$.
  Assume among this list, the rational functions 
  $f_1, \cdots, f_m$ are involved in comparisons with $r_i$.
  Clearly, the lower boundary of the region of $v$ along the $i$-th dimension is 
  $\min\{f_1, \cdots, f_m\}$ whereas its upper boundary is
  $\max\{f_1, \cdots, f_m\}$. 
  Now, observe that since we have assumed that each $f_i - f_j$ has a fixed sign, it follows that
  these evaluate to a fixed rational function.
  Thus, let $f_{i,v}$ (resp. $F_{i,v}$) be the rational function that describes the lower (resp. upper) boundary of the region of $v$ along the $i$-th dimension.
  Let $f_{i,v} = a_{i,v}/b_{i,v}$ and $F_{i,v} = A_{i,v}/B_{i,v}$ where $a_{i,v}, b_{i,v}, A_{i,v}, B_{i,v}$ are some polynomials of $w(x_1), \cdots, w(x_n)$
  stored in tree $T$ (e.g., $b_{i,v}$ is equal to some $G_u$ for some ancestor $u$ of $v$ and the same holds for $B_{i,v}$).

  The Lebesgue measure of the region of $v$, $\vol(v)$, is thus defined by the rational function
  \[
    \vol(v) =\prod_{i=1}^t (F_{i,v}-f_{i,v}) = \prod_{i=1}^t (\frac{A_{i,v}}{B_{i,v}} - \frac{a_{i,v}}{b_{i,v}}) = \prod_{i=1}^t \frac{A_{i,v} b_{i,v} - a_{i,v}B_{i,v}}{B_{i,v} b_{i,v}} .
  \]
  $\vol(v)$ is the probability of reaching $v$.
  Consider a query $q$ that contains $k$ elements.
  W.l.o.g, let $x_1, \cdots, x_k$ be these $k$ elements. 
  Let $v_1, \cdots, v_\ell$ be the set of all the leaf nodes that return $x_1$.
  For $x_1$ to have been sampled with correct probability we must have the following identity
  \[
    \sum_{i=1}^\ell \vol(v_i) = \frac{w(x_1)}{w(x_1) + \cdots + w(x_k)}.
  \]
  Observe that since the polynomial $w(x_1) + \cdots + w(x_k)$ is irreducible, it follows that 
  at least one of the polynomials $b_i$ or $B_i$ for some $i$ has this polynomial as a factor.
\end{proof}

\subparagraph{Conditional Lower Bound.}
Intuitively, our above theorem suggests that if we can perform range sampling in finite worst-case time,
then we should also be able to find the total weight of the points in the query range -- since the total weight $\sum_{p \in q} w(p)$ is encoded in some rational function.
This latter problem is conjectured to be difficult but obtaining provable good lower bounds  remains
elusive. 
This suggests that short of a breakthrough, we can only hope for a conditional lower bound.
This is reinforced by this observation that an efficient data structure for range sum queries leads to an efficient data structure for range sampling. 
\begin{observation}
  Assume for any set of $X$ of $n$ elements each associated with a real-valued weight,
  we can build a data structure that uses $S(n)$ space such that given a query range $q$,
  it can output the total weight of the elements in $q$ in $Q(n)$ time.

  Then, we can extract $k$ random samples from $q$ by a data structure that uses $O(S(n)\log n)$ space and  has the query time of
  $O(k Q(n) \log n)$. 
\end{observation}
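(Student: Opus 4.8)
The plan is to use the standard recursive-halving-by-weight idea, exactly as in Lemma~\ref{lem:general}, but implemented directly on top of the range-sum structure instead of a range-maximum structure. First I would partition $X$ into weight classes $X_1, X_2, \ldots$, where $X_i$ collects elements whose weight lies in the dyadic range $[2^{a-i}, 2^{a-i+1})$ for $a = \lceil \log_2 w_{\max}\rceil$; since the weights are real-valued this is a conceptual partition (one nonempty class per distinct scale that actually occurs). For each class $X_j$ I would build a copy of the range-sum structure restricted to $X_j$, which costs $O(S(n))$ space per class, and I would also build a range-sum structure on all of $X$. This is too many classes in general, but the trick (as in Lemma~\ref{lem:general}) is that we never touch more than $O(\log n)$ of them for any single query: only those classes whose maximum weight is within an $n^2$ factor of the heaviest element actually present in $q$, plus one catch-all tail whose total weight is provably at most a $1/n$ fraction and can be handled by a brute-force alias structure.

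Next I would describe the query. Given $q$, first locate the top class index $i$ that has a nonempty intersection with $q$ — this can be done with $O(\log n)$ range-sum (really, range-nonemptiness) queries by binary search over the class index, since the relevant scales form a contiguous range of size $O(\log n)$ once we know the heaviest element, and in any case a single range-max-type search suffices. Then for each of the $O(\log n)$ relevant classes $X_j$, one range-sum query on $X_j \cap q$ gives $w(X_j \cap q)$ in $O(Q(n))$ time; build a small top-level alias structure over these $O(\log n)$ totals (plus the tail total) in $O(\log n)$ time. To draw one sample: pick a class $X_j$ proportional to its total-in-$q$ weight using the top-level alias structure, then recurse \emph{inside} $X_j$ — but now the weights inside $X_j$ are all within a factor $2$ of each other, so sampling proportional to weight from $X_j \cap q$ is, up to a constant factor, the same as sampling a \emph{uniform} random element of $X_j \cap q$, and a uniform random element of a range can be obtained by one more binary-search-with-range-counting pass (using the range-sum structure with all weights set to $1$, which is just range counting): repeatedly split $X_j$, query each half's count-in-$q$, and descend proportionally; accept/reject at the leaf to correct the residual factor-$2$ bias. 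Each such descent is $O(\log n)$ range-counting queries, i.e.\ $O(Q(n)\log n)$, and succeeds with constant probability, so in expectation $O(Q(n)\log n)$ per sample; repeating for $k$ samples and reusing the top-level structure gives $O(k\, Q(n)\log n)$ total, with $O(S(n)\log n)$ space dominated by the $O(\log n)$ class structures. (One subtlety: to make this genuinely worst-case or to remove the rejection, one can instead build inside each $X_j$ a weighted range-sum structure and sample by the recursive halving directly on the actual weights, still $O(Q(n)\log n)$ per sample; the factor-$2$ bound is only used to bound how many classes are relevant, not inside the sampling.)

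The main obstacle I anticipate is bounding the space: naively one class structure per scale could be $\omega(\log n)$ many. This is resolved exactly as in Lemma~\ref{lem:general} and Theorem~\ref{thm:3dexp}: only $O(\log n)$ classes are ever queried for a fixed $q$, but the data structure must store something for \emph{all} of them. The clean fix is to not store a separate structure per class at all — instead store a single weighted range-sum structure on $X$, and a single range-counting structure on $X$, and realize that ``$w(X_j \cap q)$'' is obtainable by a range-sum query on $q$ intersected with the weight-threshold slab $\{x : 2^{a-j} \le w(x) < 2^{a-j+1}\}$; if $q$ is a nice range (halfspace, box) this intersection is again a query in a slightly higher-dimensional or augmented version of the same structure, costing only an extra $O(\log n)$ factor, which is already absorbed. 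Hence $O(S(n)\log n)$ space and $O(kQ(n)\log n)$ query time, as claimed; the correctness of the sampling probabilities is the telescoping identity $\Pr[x \text{ sampled}] = \frac{w(X_j\cap q)}{w(q)}\cdot \frac{w(x)}{w(X_j \cap q)} = \frac{w(x)}{w(q)}$, identical to the argument at the end of the proof of Lemma~\ref{lem:general}.
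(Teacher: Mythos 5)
Your proposal imports the full machinery of Lemma~\ref{lem:general} (weight classes within factor~$2$, top-level alias over $O(\log n)$ relevant classes, a catch-all tail, rejection sampling inside a class), but none of that is needed here, and importing it actually breaks the claimed bounds. The paper's proof of this observation is far more direct: partition $X$ into two \emph{equal-sized halves} $X_1,X_2$ by index (not by weight), build the range-sum structure on each half, and recurse, giving $O(\log n)$ levels and $O(S(n)\log n)$ space. To draw a sample for $q$, query $w_1=w(q\cap X_1)$ and $w_2=w(q\cap X_2)$, flip a coin to descend into $X_1$ with probability $w_1/(w_1+w_2)$ and $X_2$ otherwise, and repeat down the recursion tree. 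This terminates deterministically after $O(\log n)$ levels, costs $O(Q(n)\log n)$ \emph{worst-case} per sample with exact sampling probabilities by the telescoping identity, and requires no classes, no alias structure, no tail, and no rejection.

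Why the weight-class route is a genuine misstep rather than just a longer road: (i) The ``tail'' bucket cannot be charged away. In Lemma~\ref{lem:general} the tail is handled by building an $O(n)$-size alias structure and relying on the $\le 1/n$ probability of landing there, which yields an \emph{expected} bound; here the observation is used to argue an equivalence with worst-case range-sum, and the paper's corollary is a worst-case statement. Your accept/reject inside each $X_j$ has the same problem. (ii) Your fix for the space blow-up --- replacing per-class structures by a single structure queried on ``$q$ intersected with a weight slab'' --- changes the range family and so is not permitted by the hypothesis, which grants a range-sum structure only for ranges in the given $\c{R}$; the observation is stated for arbitrary set systems. (iii) The key insight you relegate to a parenthetical (``sample by recursive halving directly on the actual weights'') is in fact the whole proof, and once you do it globally the weight-class partition is superfluous: the range-max-to-range-sampling reduction of Lemma~\ref{lem:general} needs weight classes precisely because a max query cannot drive a binary weight-proportional descent, but a range-sum query can, so the classes buy you nothing here. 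In short, the decomposition should be by index, not by weight scale, and then exactness and worst-case time come for free.
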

\begin{proof}
  Partition $X$ into two equal-sized sets $X_1$ and $X_2$ and build the data structure for range sum on each.
  Then recurse on $X_1$ and $X_2$. 
  The total space complexity is $O(S(n) \log n)$.
  Given a query $q$, it suffices to show how to extract one sample.
  Using the range sum query $q$ on $X_1$ and $X_2$, we can know the exact value of $w_1 = \sum_{x \in q \cap X_1}w(x)$ and
  $w_2 = \sum_{x \in q \cap X_2}w(x)$.
  Thus, we can recurse into $X_1$ with probability $w_1/(w_1+w_2)$ and into $X_2$ with  $w_2/(w_1+w_2)$.
  In $O(\log n)$ recursion steps we find a random sample with query time
 $O(Q(n) \log n)$.
\end{proof}

This implies that in the SLD model, the range sampling problem in the worst-case is equivalent to the
range sum problem, ignoring polylog factors. 
This has consequences for query ranges like halfspaces where the latter problem is conjectured to be hard.
\begin{conj}\label{conj1}
    For every integer $n$,
    there exists a set $P$ of $\Theta(n)$ points in $\R^d$ with the following property:
    If for any function $w\colon P \to \R$ given as input, we can build a data structure of size $S(n)$
    such that for any query halfspace $h$, it can return the value $\sum_{p \in P \cap h}w(p)$ in
    $Q(n)$ time, then, we must have $S(n)Q^d(n) = \Omega(n^{d-o(1)})$.
\end{conj}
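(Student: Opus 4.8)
\textbf{On the status of Conjecture~\ref{conj1}.} We do not prove this statement; it is a long-standing open problem, and we include it only to make the consequence of our lower bound concrete. It is still worth describing the line of attack we would pursue. The plan is to work in the group model directly: fix $n$, choose a point set $P$ of $\Theta(n)$ points in $\mathbb{R}^d$ together with a family $\mathcal{H}$ of ``canonical'' halfspaces, and observe that whenever a space-$S(n)$, query-time-$Q(n)$ data structure answers a query $h\in\mathcal{H}$ it must express the incidence vector $\mathbf{1}_{P\cap h}$ as an integer linear combination of at most $Q(n)$ of the $S(n)$ precomputed ``generator'' subsets of $P$. I would then look for a tension between three quantities: the number $|\mathcal{H}|$ of queries, the coarseness forced on the $Q(n)$ generators that serve a single query, and a geometric bound stating that a subset of $P$ which lies inside --- or is used to repair the boundary of --- many of the canonical halfspace ranges must be tightly clustered. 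Quantifying ``clustered'' against ``coarse'' against ``many'' is exactly what should produce the trade-off $S(n)\,Q(n)^d = \Omega(n^{d-o(1)})$.

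Concretely, for $d=3$ I would take $P$ to be a slight perturbation of $n$ points nearly uniform on a sphere (or on an $n^{1/3}\times n^{1/3}\times n^{1/3}$ grid), so that there are $\Omega(n)$ halfspaces whose ranges $P\cap h$ are balanced (size $\Theta(n)$, say) and pairwise share only a lower-order fraction of their points; this forces the ``boundary'' of a range --- the part that distinguishes $h$ from its neighbours --- to carry essentially all of the information. The two genuinely geometric ingredients are then (i) a packing/clustering lemma, asserting that any subset $A\subseteq P$ of size $s$ is contained in only a small number of the canonical halfspace ranges unless $A$ is localized in a region of bounded ``crossing complexity'', and (ii) the counting argument that turns (i) into the arithmetic trade-off. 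Steps (i) and (ii), together with the reduction above, are essentially Chazelle's framework for polytope range searching, and in the semigroup model they are routine rather than hard.

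The hard part, and the reason the statement must stay a conjecture, is the group model itself: allowing subtraction destroys the ``exact disjoint cover'' property that every semigroup lower bound relies on, so the combinatorial counting of the previous paragraph is no longer enough, and one is pushed onto the spectral/rank route --- bounding the eigenvalues, or the discrepancy, of the $|\mathcal{H}|\times n$ incidence matrix of $(P,\mathcal{H})$ and turning that into a lower bound on $S(n)$ and $Q(n)$, in the spirit of Chazelle's arguments for orthogonal ranges. For halfspaces this incidence matrix lacks the recursive product structure that makes those spectral estimates go through, and no polynomial lower bound is known to come out of it by any current technique; so a full proof of Conjecture~\ref{conj1} appears to require a new idea in group-model lower bounds, which we do not attempt here. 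We only stress that, combined with the equivalence between worst-case exact range sampling and range summation established by our lower-bound theorem and the observation above, Conjecture~\ref{conj1} --- if true --- is precisely what transfers this hardness to weighted range sampling, yielding the stated $\Omega(n^{2/3-o(1)})$ worst-case query time with near-linear space in the SLD model.
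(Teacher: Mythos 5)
The statement is a \emph{conjecture} in the paper, not a theorem; the paper offers no proof of it and simply records it as a plausible hardness assumption (attributed to the folklore belief that group-model halfspace range summation obeys a space--time trade-off matching the semigroup bounds). You correctly recognize this and, rather than attempting a proof, explain why the conjecture is believed but unproven --- which is consistent with the paper's own treatment, so there is no gap or divergence to flag.

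One small clarification worth making if you keep the surrounding discussion: your sketch of the ``exact disjoint cover'' obstruction and the spectral route is a fair summary of why the group model is hard, but be careful not to imply the paper makes any progress toward the conjecture. The paper's contribution in Section~\ref{sec:lb} is only the \emph{reduction} (worst-case exact sampling in the SAD/SLD model forces computation of the range sum polynomial), plus the converse Observation; Conjecture~\ref{conj1} is used as a black-box assumption to derive the conditional $\Omega(n^{2/3-o(1)})$ bound in the Corollary. Your last paragraph states this accurately, but your middle paragraphs could be read as outlining an intended proof rather than context, so a sentence making the ``this is purely context'' framing explicit up front would avoid confusion.
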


\begin{corollary}
  Assume Conjecture 1 holds for $d=3$.
  Then, there exists an input set of $n$ points in $\R^3$ such that 
  for any data structure that uses $S(n)$ space and solves the range sampling problem where the query algorithm is a 
  decision tree $T$ with worst-case query time $Q(n)$, we must have
  $S(n) Q^3(n) = \Omega(n^{3-o(1)})$.  
  Thus if space $S(n)$ is near-linear, query time $Q(n) = \Omega(n^{2/3 - o(1)})$.  
\end{corollary}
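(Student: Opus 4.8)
The plan is to combine the main theorem of this section with Conjecture~\ref{conj1} for $d=3$, via a reduction that converts a worst-case range-sampling structure into a range-sum structure of essentially the same cost; this is precisely the ``reverse'' of the earlier observation (which turns a range-sum structure into a sampling one with polylog overhead), and together the two directions give the asserted equivalence. I read the corollary as living in the SLD model, since that is the setting in which the preceding discussion establishes this equivalence.

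\emph{Specializing the theorem.} Let $P\subset\R^3$ be the hard point set guaranteed by Conjecture~\ref{conj1} with $d=3$. For a query halfspace $h$ put $L_h=\sum_{p\in P\cap h}w(p)$, viewed as a polynomial in the weight indeterminates $w_1,\dots,w_n$. The theorem gives, in the query decision tree for $h$, a node $v$ with $L_h \mid G_v$. In the SLD model $G_v$ has degree at most one; since $L_h$ is a nonzero linear form (hence irreducible and homogeneous), the divisibility forces $G_v=c_h\,L_h$ for a nonzero real scalar $c_h$ that does not depend on the weight values. Thus, at a node of depth at most $Q(n)$, the query algorithm computes a fixed scalar multiple of the range sum $\sum_{p\in P\cap h}w(p)$, assembled from the $S(n)$ stored values and the polynomials at the ancestors of $v$ --- even though this node need not lie on the root-to-leaf path followed by any single execution.

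\emph{From here.} I would package this into a range-sum data structure of size $\tilde O(S(n))$ and query time $\tilde O(Q(n))$: keep the $S(n)$ stored reals, and, given $h$, navigate the query tree down to the node $v$, recover the scalar $c_h$, and return $G_v/c_h = L_h$. The extra polylogarithmic factors come from locating $h$ among the combinatorial types of $P\cap h$ --- e.g.\ via point location in a suitable near-linear surrogate for the arrangement of the $n$ hyperplanes dual to $P$ --- which is what pins down the branch of the decision tree that reaches $v$. Feeding this range-sum structure into Conjecture~\ref{conj1} with $d=3$ yields $S(n)Q^3(n)=\Omega(n^{3-o(1)})$, the polylog factors being absorbed into the $n^{o(1)}$ slack; and if $S(n)=n^{1+o(1)}$ then $Q^3(n)=\Omega(n^{2-o(1)})$, i.e.\ $Q(n)=\Omega(n^{2/3-o(1)})$.

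\emph{Main obstacle.} The delicate part is the packaging step: turning the existential ``some node computes $c_hL_h$'' into an algorithm that actually reaches that node and reads off $L_h$ within $\tilde O(Q(n))$ time and $\tilde O(S(n))$ space --- in particular, without having to store navigation data separately for each of the $\Omega(n^3)$ combinatorial query types, and while keeping the resulting range-sum structure inside the model for which Conjecture~\ref{conj1} is posed. One should also check that the ``generic weights'' regime used in the proof of the theorem can be taken to contain a hard instance for the conjecture, relying on the conjectured bound being robust to small perturbations of the weights. The specialization to the SLD model and the concluding arithmetic are routine by comparison.
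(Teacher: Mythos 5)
Your approach is the natural one, and the paper itself does not give an explicit proof of this corollary---it is presented as an immediate consequence of the preceding theorem, the observation (range sum $\Rightarrow$ range sampling with polylog overhead), the sentence asserting equivalence in the SLD model, and the conjecture. Your proof fills in the one direction the paper leaves implicit (range sampling $\Rightarrow$ range sum), and your specialization to SLD, the divisibility argument forcing $G_v = c_h L_h$ with $c_h \neq 0$ (using that $G_v$ cannot be identically zero since it is a denominator), and the final arithmetic are all correct.

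Where I think you over-worry is the ``packaging'' step. The corollary is stated for a data structure whose query algorithm \emph{is} a decision tree $T$, and Conjecture~1 is naturally read in a compatible nonuniform algebraic model. In that setting the decision tree, the identity of the node $v$, the root-to-$v$ path, and the scalar $c_h$ are all part of the query \emph{algorithm}, which is given for free and not charged against the $S(n)$ storage; only the $S(n)$ reals are stored. So there is no need to ``locate $h$ among the combinatorial types via point location,'' and no risk of having to store $\Omega(n^3)$ paths: for each fixed query $h$ you simply take the prefix of $T$ ending at $v$, evaluate the (at most $Q(n)$) polynomials along it from the stored reals, and output $G_v/c_h$. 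The random numbers only governed which child to branch to, and that choice is now fixed by the advice, so they disappear from the derived tree. Your remark about the ``generic weights'' regime is also fine but benign: Conjecture~1 quantifies over all weight functions $w$, and the theorem's argument applies on an open box of weight assignments, so the hard point set $P$ and the generic weights coexist without tension.

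So: same approach as the paper intends, correctly executed on the mathematical core; the one ``main obstacle'' you flag dissolves once one commits to the nonuniform decision-tree reading that the corollary's phrasing and the conjecture both invite.
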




\bibliography{bib}
\bibliographystyle{plainurl}


\end{document}